\title{Towards Practical First-Order Model Counting}
\author{Ananth K. Kidambi}{Indian Institute of Technology Bombay, Mumbai, India}{210051002@iitb.ac.in}{}{}
\author{Guramrit Singh}{Indian Institute of Technology Bombay, Mumbai, India}{guramrit@iitb.ac.in}{}{}
\author{Paulius Dilkas}{University of Toronto, Toronto, Canada \and Vector Institute, Toronto, Canada \and \url{https://dilkas.github.io/}}{paulius.dilkas@utoronto.ca}{https://orcid.org/0000-0001-9185-7840}{}
\author{Kuldeep S. Meel}{Georgia Institute of Technology, Atlanta, GA, USA \and University of Toronto, Toronto, Canada \and \url{https://www.cs.toronto.edu/~meel/}}{meel@cs.toronto.edu}{https://orcid.org/0000-0001-9423-5270}{}
\authorrunning{A.\,K. Kidambi, G. Singh, P. Dilkas, and K.\,S. Meel}
\keywords{First-order model counting, knowledge compilation, lifted inference}
\newcommand{\expr}{\mathtt{expr}}
\newcommand{\Cranetwo}{\textsc{Crane2}}
\newcommand{\Cranebfs}{\textsc{Crane2-BFS}}
\newcommand{\Cranegreedy}{\textsc{Crane2-Greedy}}
\newcommand{\friends}{\emph{Friends \& Smokers}}
\newcommand{\functions}{\emph{Functions}}
\newcommand{\bijections}{\emph{Bijections}}
\newcommand{\cmark}{\ding{51}}
\newcommand{\xmark}{\ding{55}}
\newcommand{\Ctwo}{$\mathsf{C}^{2}$}
\newcommand{\FO}{$\mathsf{FO}$}
\newcommand{\UFO}{$\mathsf{UFO}^{2} + \mathsf{CC}$}
\DeclareMathOperator{\Doms}{Doms}
\newtheorem{assumption}[theorem]{Assumption}
\crefname{type}{Type}{Types}
\crefname{clause}{clause}{clauses}
\crefname{formula}{sentence}{sentences}
\begin{document}

\maketitle

\begin{abstract}
  First-order model counting (FOMC) is the problem of counting the number of
  models of a sentence in first-order logic. Since lifted inference techniques
  rely on reductions to variants of FOMC, the design of scalable methods for
  FOMC has attracted attention from both theoreticians and practitioners over
  the past decade. Recently, a new approach based on first-order knowledge
  compilation was proposed. This approach, called \textsc{Crane}, instead of
  simply providing the final count, generates definitions of (possibly
  recursive) functions that can be evaluated with different arguments to compute
  the model count for any domain size. However, this approach is not fully
  automated, as it requires manual evaluation of the constructed functions. The
  primary contribution of this work is a fully automated compilation algorithm,
  called \Cranetwo{}, which transforms the function definitions into C++ code
  equipped with arbitrary-precision arithmetic. These additions allow the new
  FOMC algorithm to scale to domain sizes over \num{500000} times larger than
  the current state of the art, as demonstrated through experimental results.
\end{abstract}

\section{Introduction}

\emph{First-order model counting} (FOMC) is the task of determining the number
of models for a sentence in first-order logic over a specified domain. The
weighted variant, WFOMC, computes the total weight of these models, linking
logical reasoning with probabilistic
frameworks~\cite{DBLP:conf/ijcai/BroeckTMDR11}. It builds upon earlier efforts
in weighted model counting for propositional
logic~\cite{DBLP:journals/ai/ChaviraD08} and broader attempts to bridge logic
and
probability~\cite{DBLP:journals/ai/Nilsson86,novak2012mathematical,vsaletic2024graded}.
WFOMC is central to \emph{lifted inference}, which enhances the efficiency of
probabilistic calculations by exploiting
symmetries~\cite{DBLP:conf/ecai/Kersting12}. Lifted inference continues to
advance, with applications extending to constraint satisfaction
problems~\cite{DBLP:journals/jair/TotisDRK23} and probabilistic answer set
programming~\cite{DBLP:journals/ijar/AzzoliniR23}. Moreover, WFOMC has proven
effective at reasoning over probabilistic
databases~\cite{DBLP:journals/debu/GribkoffSB14} and probabilistic logic
programs~\cite{DBLP:journals/ijar/RiguzziBZCL17}. FOMC algorithms have also
facilitated breakthroughs in discovering integer
sequences~\cite{DBLP:conf/ijcai/SvatosJT0K23} and developing recurrence
relations for these sequences~\cite{DBLP:conf/kr/DilkasB23}. Recently, these
algorithms have been extended to perform sampling
tasks~\cite{DBLP:journals/ai/WangPWK24}.

The complexity of FOMC is generally measured by \emph{data complexity}, with a
sentence classified as \emph{liftable} if it admits a polynomial-time solution
relative to the domain size~\cite{DBLP:conf/starai/JaegerB12}. While all
sentences with up to two variables are known to be
liftable~\cite{DBLP:conf/nips/Broeck11,DBLP:conf/kr/BroeckMD14}, Beame et
al.~\cite{DBLP:conf/pods/BeameBGS15} demonstrated that liftability does not
extend to all sentences, identifying an unliftable sentence with three
variables. Recent work has extended the liftable fragment with additional
axioms~\cite{DBLP:conf/aaai/TothK23,DBLP:journals/ai/BremenK23} and counting
quantifiers~\cite{DBLP:journals/jair/Kuzelka21}, expanding our understanding of
liftability.

FOMC algorithms are diverse, with approaches ranging from \emph{first-order
  knowledge compilation} (FOKC) to cell
enumeration~\cite{DBLP:conf/uai/BremenK21}, local
search~\cite{DBLP:journals/pvldb/NiuRDS11}, and Monte Carlo
sampling~\cite{DBLP:journals/cacm/GogateD16}. FOKC-based algorithms are
particularly prominent, transforming sentences into structured representations
such as circuits or graphs. Even when multiple algorithms can solve the same
instance, FOKC algorithms are known to find polynomial-time solutions where the
polynomial has a lower degree than other
approaches~\cite{DBLP:conf/kr/DilkasB23}. The recently developed ability of a
FOKC algorithm to formulate solutions in terms of recursive
functions~\cite{DBLP:conf/kr/DilkasB23} is also noteworthy since the only other
proposed alternative is to guess recursive
relations~\cite{DBLP:conf/ilp/BarvinekB0ZK21}. Notable examples of FOKC
algorithms include \textsc{ForcLift}~\cite{DBLP:conf/ijcai/BroeckTMDR11} and its
extension \textsc{Crane}~\cite{DBLP:conf/kr/DilkasB23}.

The \textsc{Crane} algorithm marked a significant step forward, expanding the
range of sentences handled by FOMC algorithms. However, it had notable
limitations: it required manual evaluation of function definitions to compute
model counts and introduced recursive functions without proper base cases,
making it more difficult to use. To address these shortcomings, we present
\Cranetwo{}, a fully automated FOMC algorithm that overcomes the constraints of
its predecessor. \Cranetwo{} can handle domain sizes over \num{500000} times
larger than those of previous algorithms and simplifies the user experience by
automatically managing base cases and compiling function definitions into
efficient C++ programs.

The paper is structured as follows. \Cref{sec:preliminaries} covers the
necessary preliminaries, including notation, terminology, and background
information about (W)FOMC and FOKC\@. \Cref{sec:main} describes all aspects of
\Cranetwo{}:
\begin{itemize}
  \item its overall structure and interactions with \textsc{Crane},
  \item how we identify a sufficient set of base cases for a recursive function,
  \item how we construct the sentence that describes each base case, and
  \item how we compile function definitions into a C++ program with caching
        mechanisms that ensure efficiency.
\end{itemize}
Then, \cref{sec:experiments} presents our experimental results, demonstrating
\Cranetwo{}'s performance compared to other FOMC algorithms. Finally,
\cref{sec:conclusion} concludes the paper by discussing promising avenues for
future work.

\section{Preliminaries}\label{sec:preliminaries}

We begin this section by describing some notation that we will use throughout
the paper. Then, in \cref{sec:logic,sec:fomc}, we introduce the basic
terminology of first-order logic and formally define (W)FOMC\@.
\Cref{sec:threelogics} provides a brief overview of the different types of
first-order logics commonly used in FOMC literature. \Cref{sec:crane} outlines
the principles of FOKC, particularly in the context of \textsc{Crane}. Finally,
in \cref{sec:algebra}, we introduce the algebraic terminology used to describe
the output of \textsc{Crane}, i.e., functions and equations that define them.

\paragraph*{Notation}
We use $\mathbb{N}_{0}$ to represent the set of non-negative integers. In both
algebra and logic, we write
$S[x_{1} \mapsto y_{1}, x_{2} \mapsto y_{2}, \dots, x_{n} \mapsto y_{n}]$ to
denote the application of a \emph{substitution} to an expression $S$, where all
instances of $x_{i}$ are replaced with $y_{i}$ for all $i = 1, \dots, n$.
Additionally, for any variable $n$ and $a, b \in \mathbb{N}_{0}$, let
$[a \le n \le b] \coloneqq \begin{cases}
  1 & \text{if $a \le n \le b$} \\
  0 & \text{otherwise}
\end{cases}$.

\subsection{First-Order Logic}\label{sec:logic}

This section reviews the basic concepts of first-order logic used in FOKC
algorithms. We focus specifically on the format used internally by
\textsc{ForcLift} and its descendants. See \cref{sec:preprocessing} for a brief
overview of how \Cranetwo{} transforms an arbitrary sentence in first-order
logic into this internal format.

A \emph{term} can be either a variable or a constant. An \emph{atom} can be
either $P(t_{1}, \dots, t_{m})$ (i.e., $P(\mathbf{t})$) for some predicate $P$
and terms $t_{1}, \dots, t_{m}$, or $x=y$ for some terms $x$ and $y$. The
\emph{arity} of a predicate is the number of arguments it takes, i.e., $m$ in
the case of the predicate $P$ mentioned above. We write $P/m$ to denote a
predicate along with its arity. A \emph{literal} can be either an atom (i.e., a
\emph{positive} literal) or its negation (i.e., a \emph{negative} literal). An
atom containing no variables, only constants, is called \emph{ground}. A
\emph{clause} is of the form $\forall x_{1} \in \Delta_{1}\text{.
}\forall x_{2} \in \Delta_{2}\dots\text{ }\forall x_{n} \in \Delta_{n}\text{.
}\phi(x_{1}, x_{2}, \dots, x_{n})$, where $\phi$ is a disjunction of literals
that contain only the variables $x_{1}, \dots, x_{n}$ (and any constants). We
say that a clause is a \emph{(positive) unit clause} if there is only one
literal with a predicate, and it is a positive literal. Finally, a
\emph{sentence} is a conjunction of clauses. Throughout the paper, we will use
set-theoretic notation, interpreting a sentence as a set of clauses and a clause
as a set of literals.

\begin{remark*}
  Conforming to previous work~\cite{DBLP:conf/ijcai/BroeckTMDR11}, the
  definition of a clause includes universal quantifiers for all its variables.
  While it is possible to rewrite the entire sentence with all quantifiers at
  the front, the format we describe has proven convenient for practical use.
\end{remark*}

\subsection{First-Order Model Counting}\label{sec:fomc}

In this section, we will formally define FOMC and its weighted variant. Although
this work focuses on FOMC, computing the FOMC using \Cranetwo{} requires using
WFOMC for sentences with existential quantifiers. For such sentences,
preprocessing (described in \cref{sec:preprocessing}) introduces predicates with
non-unary weights that must be accounted for to compute the correct model count.

\begin{definition}[Structure, model]\label{def:model}
  Let $\phi$ be a sentence. For each predicate $P/n$ in $\phi$, let
  ${(\Delta_{i}^{P})}_{i=1}^{n}$ be a list of the corresponding domains. Let
  $\sigma$ be a map from the domains of $\phi$ to their interpretations as
  finite sets, ensuring the sets are pairwise disjoint and contain the
  corresponding constants from $\phi$. A \emph{structure} of $\phi$ is a set $M$
  of ground literals. It is defined by adding to $M$ either $P(\mathbf{t})$ or
  $\neg P(\mathbf{t})$ for every predicate $P/n$ in $\phi$ and every $n$-tuple
  $\mathbf{t} \in \prod_{i=1}^{n} \sigma(\Delta_{i}^{P})$. (Here, $\prod$ is the
  usual Cartesian product.) A structure is a \emph{model} if it makes $\phi$
  valid.
\end{definition}

\begin{example}[Counting bijections]\label{example:bijections}
  Let us consider the following sentence (previously examined by Dilkas and
  Belle~\cite{DBLP:conf/kr/DilkasB23}) that defines predicate $P$ as a bijection
  between two domains $\Gamma$ and $\Delta$:
  \begin{equation}\label[formula]{eq:bijections}
    \begin{gathered}
      (\forall x \in \Gamma\text{. }\exists y \in \Delta\text{. }P(x, y))\land{}\\
      (\forall y \in \Delta\text{. }\exists x \in \Gamma\text{. }P(x, y))\land{}\\
      (\forall x \in \Gamma\text{. }\forall y, z \in \Delta\text{. }P(x, y) \land P(x, z) \Rightarrow y = z)\land{}\\
      (\forall x, z \in \Gamma\text{. }\forall y \in \Delta\text{. }P(x, y) \land P(z, y) \Rightarrow x = z).
    \end{gathered}
  \end{equation}
  Let $\sigma$ be defined as $\sigma(\Gamma) \coloneqq \{\, 1, 2\,\}$ and
  $\sigma(\Delta) \coloneqq \{\,a, b\,\}$. \Cref{eq:bijections} has two models:
  \[
    \{\, P(1, a), P(2, b), \neg P(1, b), \neg P(2, a) \,\} \qquad \text{and} \qquad \{\, P(1, b), P(2, a), \neg P(1, a), \neg P(2, b) \,\}.
  \]
\end{example}

\begin{definition}[WFOMC instance]\label{def:instance}
  A \emph{WFOMC instance} comprises:
  \begin{itemize}
    \item a sentence $\phi$,
    \item two (rational) \emph{weights} $w^{+}(P)$ and $w^{-}(P)$ assigned to
          each predicate $P$ in $\phi$, and
    \item $\sigma$ as described in \cref{def:model}.
  \end{itemize}
  Unless specified otherwise, we assume all weights are equal to 1.
\end{definition}

\begin{definition}[WFOMC~\cite{DBLP:conf/ijcai/BroeckTMDR11}]
  Given a WFOMC instance $(\phi, w^{+}, w^{-}, \sigma)$ as in
  \cref{def:instance}, the \emph{(symmetric) weighted first-order model count}
  (WFOMC) of $\phi$ is
  \begin{equation}\label{eq:wfomc}
    \sum_{M \models \phi} \prod_{P(\mathbf{t}) \in M} w^{+}(P) \prod_{\neg P(\mathbf{t}) \in M} w^{-}(P),
  \end{equation}
  where the sum is over all models of $\phi$.
\end{definition}

\subsection{The Three Logics of FOMC}\label{sec:threelogics}

\begin{table}[t]
  \centering
  \caption{A comparison of the three logics used in FOMC\@. The
    2\textsuperscript{nd}--5\textsuperscript{th} columns refer to the number of
    sorts, support for constants, the maximum number of variables, and supported
    quantifiers, respectively. The last column lists supported atoms in addition
    to those of the form $P(\mathbf{t})$ for a predicate $P/n$ and an $n$-tuple
    of terms $\mathbf{t}$. Here, $k$ and $m$ are non-negative integers, where
    $m$ depends on the domain size, $P$ is a predicate, and $x$ and $y$ are
    terms.}\label{tbl:logics}
  \begin{tabular}{llclll}
    \toprule
    Logic & Sorts & Constants & Variables & Quantifiers & Add.\ atoms\\
    \midrule
    \rowcolor{gray!10}
    \FO & one or more & \cmark & unlimited & $\forall$, $\exists$ & $x = y$\\
    \Ctwo & \multirow{2}{*}{one} & \multirow{2}{*}{\xmark} & \multirow{2}{*}{two} & $\forall$, $\exists$, $\exists^{= k}$, $\exists^{\le k}$, $\exists^{\ge k}$ & ---\\
    \UFO & & & & $\forall$ & $|P| = m$\\
    \bottomrule
  \end{tabular}
\end{table}

FOMC commonly utilises three types of first-order logic: \FO{}, \Ctwo{}, and
\UFO{}. \Cref{tbl:logics} summarises the key differences among them. \FO{} is
the input format for \textsc{ForcLift} and its extensions \textsc{Crane} and
\Cranetwo{}. \Ctwo{} is often used in the literature on \textsc{FastWFOMC} and
related methods~\cite{DBLP:journals/jair/Kuzelka21,DBLP:conf/aaai/MalhotraS22}.
(Note that no algorithm accepts \Ctwo{} as input.) Finally, \UFO{} is the input
format supported by the most recent implementation of
\textsc{FastWFOMC}~\cite{DBLP:conf/kr/TothK24}. All three logics are
function-free, and domains are always assumed to be finite. As usual, we
presuppose the \emph{unique name assumption}, which states that two constants
are equal if and only if they are the same constant~\cite{DBLP:books/aw/RN2020}.

In \FO{}, each term has a designated \emph{sort}, and each predicate $P/n$
corresponds to a sequence of $n$ sorts. Each sort has its corresponding domain.
These assignments to sorts are typically left implicit and follow from the
quantifiers, e.g., $\forall x,y \in \Delta\text{. }P(x, y)$ implies that the
variables $x$ and $y$ have the same sort. On the other hand,
$\forall x \in \Delta\text{. }\forall y \in \Gamma\text{. } P(x, y)$ implies
that $x$ and $y$ have different sorts, and it would be improper to write, for
example, $\forall x \in \Delta\text{. }\forall y \in \Gamma\text{.
} P(x, y) \lor x = y$. \FO{} is also the only logic to support constants,
sentences with more than two variables, and the equality predicate. While we do
not explicitly refer to sorts in the rest of the paper, the many-sorted nature
of \FO{} is paramount to the algorithms presented therein.

\begin{remark*}
  In the case of \textsc{ForcLift} and its extensions, support for a sentence as
  valid input does not imply that the algorithm can compile the sentence into a
  circuit or graph suitable for lifted model counting. However,
  \textsc{ForcLift} compilation always succeeds on any \FO{} sentence without
  constants and with at most two
  variables~\cite{DBLP:conf/nips/Broeck11,DBLP:conf/kr/BroeckMD14}.
\end{remark*}

Compared to \FO{}, \Ctwo{} and \UFO{} lack support for constants, the equality
predicate, multiple domains, and sentences with more than two variables. The
advantage that \Ctwo{} brings over \FO{} is the inclusion of \emph{counting
  quantifiers}. That is, alongside $\forall$ and $\exists$, \Ctwo{} supports
$\exists^{=k}$, $\exists^{\le k}$, and $\exists^{\ge k}$ for any positive
integer $k$. For example, $\exists^{=1} x\text{. }\phi(x)$ means that there
exists \emph{exactly one} $x$ such that $\phi(x)$, and $\exists^{\le 2} x\text{.
}\phi(x)$ means that there exist \emph{at most two} such $x$. \UFO{}, on the
other hand, does not support any existential quantifiers but instead
incorporates \emph{(equality) cardinality constraints}. For example, $|P| = 3$
constrains all models to have \emph{precisely three positive literals with the
  predicate $P$}. Except when stated otherwise, we will write all example
sentences in \FO{}.

\subsection{Crane and First-Order Knowledge Compilation}\label{sec:crane}

As our work builds on \textsc{Crane}, in this section, we will briefly outline
the steps \textsc{Crane} goes through to compile a sentence into a set of
function definitions. We divide the inner workings of the algorithm into two
stages: preprocessing and compilation.

\subsubsection{Preprocessing}\label{sec:preprocessing}

This stage transforms an arbitrary sentence into the format described in
\cref{sec:logic}, primarily by eliminating existential quantifiers via a process
called \emph{Skolemization}~\cite{DBLP:conf/kr/BroeckMD14}. For example, the
first conjunct of \cref{eq:bijections}, i.e.,
\begin{equation}\label[formula]{eq:skolemizationinitial}
  \forall x \in \Gamma\text{. }\exists y \in \Delta\text{. } P(x, y)
\end{equation}
is transformed into
\begin{equation}\label[formula]{eq:skolemization}
  \begin{aligned}
    (\forall x \in \Gamma\text{. } &Z(x))\land{}\\
    (\forall x \in \Gamma\text{. } \forall y \in \Delta\text{. } &Z(x) \lor \neg P(x, y))\land{}\\
    (\forall x \in \Gamma\text{. } &S(x) \lor Z(x))\land{}\\
    (\forall x \in \Gamma\text{. } \forall y \in \Delta\text{. } &S(x) \lor \neg P(x, y)),
  \end{aligned}
\end{equation}
where $Z/1$ and $S/1$ are two new predicates with $w^{-}(S) = -1$. The idea
behind Skolemization is that the output sentence has more models, but for every
additional model with a positive weight, there is an equivalent model with a
negative weight. Hence, the WFOMC of
\cref{eq:skolemizationinitial,eq:skolemization} is the same. Moreover, note that
in this example, preprocessing introduces some redundancy, i.e., the second and
third clauses of \cref{eq:skolemization} could be removed. These redundancies
are handled at the beginning of the compilation process described in
\cref{sec:compilation}.

To illustrate how Skolemization preserves the WFOMC, we let $\Gamma = \{\,a\,\}$
and $\Delta = \{\,1, 2\,\}$ and assume that $w^+(P) = w^-(P) = 1$. Here, for
conciseness, we omit negative literals when writing models.
\Cref{eq:skolemizationinitial} has three models: $\{\,P(a, 1)\,\}$,
$\{\,P(a, 2)\,\}$, and $\{\,P(a, 1), P(a, 2)\,\}$, all with weight 1, so the
WFOMC is 3. \Cref{eq:skolemization} has the following models:
\begin{itemize}
  \item $\{\,Z(a)\,\}$,
  \item $\{\,Z(a), S(a)\,\}$,
  \item $\{\,Z(a), S(a), P(a, 1)\,\}$,
  \item $\{\,Z(a), S(a), P(a, 2)\,\}$,
  \item $\{\,Z(a), S(a), P(a, 1), P(a, 2)\,\}$.
\end{itemize}
The first model has a weight of -1 because $w^-(S) = -1$, while all the other
models have a weight of 1. Hence, the WFOMC is $4 - 1 = 3$.

\subsubsection{Compilation}\label{sec:compilation}

After preprocessing, \textsc{Crane} compiles the sentence into the triple
$(\mathcal{E}, \mathcal{F}, \mathcal{D})$, where $\mathcal{E}$ is the set of
equations, and $\mathcal{F}$ and $\mathcal{D}$ are auxiliary maps. $\mathcal{F}$
maps function names to sentences in such a way that the evaluation of the former
corresponds to the FOMC of the latter. $\mathcal{D}$ maps function names and
argument indices to domains. The equations in $\mathcal{E}$ can define an
arbitrary number of functions, one of which (which we will always denote as $f$)
represents the solution to the FOMC problem. Computing the FOMC for specific
domain sizes involves invoking $f$ with those sizes as inputs. $\mathcal{D}$
records this correspondence between function arguments and domains.

\begin{example}\label{example:solution}
  \textsc{Crane} compiles \cref{eq:bijections} for bijection counting into
  \begin{align*}
    \mathcal{E} &= \left\{\,\begin{aligned}f(m, n) &= \sum_{l=0}^{n} \binom{n}{l}{(-1)}^{n-l}g(l, m),\\ g(l, m) &= \sum_{k=0}^{m}[0 \le k \le 1]\binom{m}{k}g(l-1, m-k)\end{aligned}\,\right\};\\
    \mathcal{D} &= \{\, (f, 1) \mapsto \Gamma, (f, 2) \mapsto \Delta, (g, 1) \mapsto \Delta^{\top}, (g, 2) \mapsto \Gamma \,\},
  \end{align*}
  where $\Delta^{\top}$ is a newly introduced domain. (We omit the definition of
  $\mathcal{F}$ as the sentences can become quite verbose.) To compute the
  number of bijections between two sets of cardinality 3, one would evaluate
  $f(3, 3)$; however, the definition of $g$ is incomplete: $g$ is a recursive
  function presented without any base cases. $\mathcal{D}$ encodes that in
  $f(m, n)$, $m$ and $n$ represent $|\Gamma|$ and $|\Delta|$, respectively.
  Similarly, in $g(l, m)$, $l$ represents $|\Delta^{\top}|$, and $m$ represents
  $|\Gamma|$.
\end{example}

Compilation is performed primarily by applying \emph{(compilation) rules} to
sentences. \textsc{Crane} has two modes depending on the selection process for
compilation rules when multiple alternatives are available. The first option is
to use greedy search: there is a list of rules, and the first applicable rule is
the one that gets used, disregarding all the others. The second option is to use
a combination of greedy and \emph{breadth-first search} (BFS). In this approach,
we classify each compilation rule as greedy or non-greedy. Greedy rules are
applied as soon as possible at any stage of the compilation process, while BFS
is executed over all applicable non-greedy rules, identifying the solution that
necessitates the smallest number of non-greedy rules.

\subsection{Algebra}\label{sec:algebra}

In this paper, we use both logical and algebraic constructs. While the rest of
\cref{sec:preliminaries} focused on the former, this section describes the
latter. We write $\expr{}$ for an arbitrary algebraic expression. In the context
of algebra, a \emph{constant} is a non-negative integer. Likewise, a
\emph{variable} can either be a parameter of a function or a variable introduced
through summation, such as $i$ in the expression $\sum_{i=1}^{n} \expr$. A
\emph{function call} is $f(x_{1}, \dots, x_{n})$ (or $f(\mathbf{x})$ for short),
where $f$ is an $n$-ary function, and each $x_{i}$ is an algebraic expression
consisting of variables and constants. A (function) \emph{signature} is a
function call that contains only variables. Given two function calls,
$f(\mathbf{x})$ and $f(\mathbf{y})$, we say that $f(\mathbf{y})$ \emph{matches}
$f(\mathbf{x})$ if $x_{i} = y_{i}$ whenever $x_{i}, y_{i} \in \mathbb{N}_{0}$.
An \emph{equation} is $f(\mathbf{x}) = \expr{}$, where $f(\mathbf{x})$ is a
function call.

\begin{definition}[Base case]\label{def:basecase}
  Let $f(\mathbf{x})$ be a function call where each $x_{i}$ is either a constant
  or a variable. Then the function call $f(\mathbf{y})$ is a \emph{base case} of
  $f(\mathbf{x})$ if $f(\mathbf{y}) = f(\mathbf{x})\sigma$, where $\sigma$ is a
  substitution that replaces one or more \emph{variable} $x_{i}$'s with a
  constant while leaving constants unchanged.
\end{definition}

\begin{example}
  In the equation $f(m, n) = f(m-1, n) + nf(m-1, n-1)$, the only constant is
  $1$, and the variables are $m$ and $n$. The equation contains three function
  calls: one on the left-hand side (LHS) and two on the right-hand side (RHS).
  The function call on the LHS is a signature. Function calls such as $f(4, n)$,
  $f(m, 0)$, and $f(8, 1)$ are all considered base cases of $f(m, n)$ (only some
  of which are useful).
\end{example}

\section{Technical Contributions}\label{sec:main}

\begin{figure}[t]
  \centering
  \begin{tikzpicture}
    \node[anchor=west] at (-1, 0) (formula) {$\phi$};
    \node[draw,rounded rectangle] at (3, 0) (compilewithbasecases) {\CompileWithBaseCases};
    \node[draw,rounded rectangle] at (9, 0) (compilation) {Compile to C++};

    \node[draw,rounded rectangle,dashed] at (9, -3) (cpp) {C++ code};
    \node[anchor=west] at (-1, -3) (sizes) {Domain sizes};
    \node at (12, -3) (count) {Answer};

    \node[draw,rounded rectangle] at (3, -2) (findbasecases) {\FindBaseCases};
    \node[draw,rounded rectangle,left = 0.1cm of findbasecases] (crane) {\Compile};
    \node[draw,rounded rectangle,right = 0.1cm of findbasecases] (propagate) {\Propagate};
    \node[draw,rounded rectangle,right = 0.1cm of propagate] (simplify) {\Simplify};

    \node[draw,fit={(compilewithbasecases) (compilation) (crane) (findbasecases) (propagate)},inner ysep=7pt,yshift=5pt] {};
    \node at (0.7, 0.5) {\Cranetwo};

    \draw[-Latex] (formula) -- (compilewithbasecases);
    \draw[-Latex] (compilewithbasecases) -- node[above] {$\mathcal{E}$} (compilation);
    \draw[-Latex] (compilation) -- (cpp);
    \draw[-Latex] (sizes) -- (cpp);
    \draw[-Latex] (cpp) -- (count);

    \draw[-Latex,dashed] (compilewithbasecases) -- node[midway,left] {uses} (crane);
    \draw[-Latex,dashed] (compilewithbasecases) -- node[midway,left] {uses} (findbasecases);
    \draw[-Latex,dashed] (compilewithbasecases) -- node[midway,left] {uses} (propagate);
    \draw[-Latex,dashed] (compilewithbasecases) -- node[midway,right] {uses} (simplify);
  \end{tikzpicture}
  \caption[]{Using \Cranetwo{} to compute the model count of a sentence $\phi$.
    First, \Cranetwo{} compiles $\phi$ into a set of equations, which form the
    basis for creating a C++ program. Executing the program with different
    command line arguments calculates the model count of $\phi$ for various
    domain sizes. To accomplish this, the \CompileWithBaseCases procedure
    employs \textsc{Crane}, algebraic simplification techniques (denoted as
    \Simplify), and two other auxiliary procedures. }\label{fig:overview}
\end{figure}
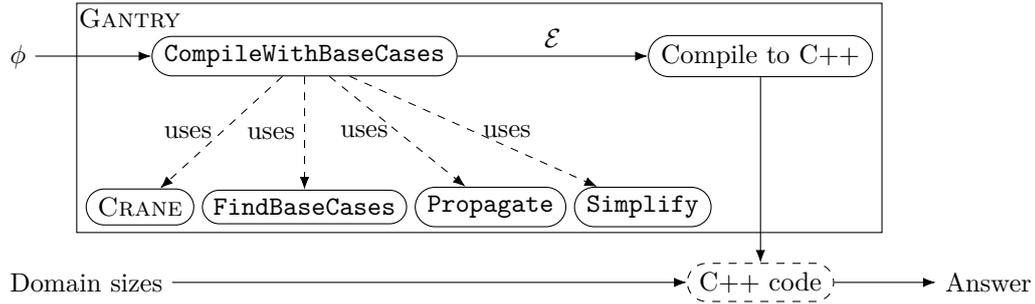

\Cref{fig:overview} provides an overview of \Cranetwo{}'s workflow. We will
briefly describe and motivate each procedure before going into more detail in
the corresponding subsection.

\CompileWithBaseCases (see \cref{sec:completing}), the core procedure of
\Cranetwo{}, is responsible for completing the function definitions produced by
\textsc{Crane} with the necessary base cases. To do so, it may recursively call
itself (and \textsc{Crane}) on other sentences. We prove that the number of such
recursive calls is upper bound by the number of domains.

\cref{sec:completing} also describes the \Simplify procedure for algebraic
simplification. It is crucial for simplifying, e.g., a sum of $n$ terms, only
some of which are non-zero. More generally, the equations returned by
\textsc{Crane} often benefit from easily detectable algebraic simplifications
such as $0 \cdot \text{anything} = 0$ and $\text{anything}^{0} = 1$. Note that
the latter includes $0^{0} = 1$, as 1 is the multiplicative identity.

\FindBaseCases (described in \cref{sec:identifying}) inspects a set of equations
to identify a sufficient set of base cases for a given set of equations. We
prove that the returned set of base cases ensures that the evaluation of the
resulting function definitions will never get stuck in an infinite loop.

\Cref{sec:propagating} introduces the \Propagate procedure, which takes a
sentence $\phi$, a domain $\Delta$, and $n \in \mathbb{N}_{0}$. It returns
$\phi$ transformed under the assumption that $|\Delta| = n$, with $n$ new
constants added and all variables quantified over $\Delta$ eliminated. For
example, when computing a base case such as $f(0, y)$, \Propagate will
significantly simplify $\phi$ with the assumption that the domain associated
with the first parameter of $f$ (i.e., $\mathcal{D}(f, 1)$) is empty.
\CompileWithBaseCases{\Propagate{$\phi$, $\mathcal{D}(f, 1)$, $0$}} will then
return the equations for the base case $f(0, y)$.

\Cref{sec:smoothing} describes a new \emph{smoothing} procedure that ensures
\Propagate preserves the correct model count. Smoothing is a well-known
technique in knowledge compilation algorithms for propositional model
counting~\cite{darwiche2001tractable}. Although FOMC algorithms have used
smoothing before~\cite{DBLP:conf/ijcai/BroeckTMDR11}, our setting requires a
novel approach.

With the help of other procedures outlined above, \CompileWithBaseCases returns
a set of equations that fully cover the base cases of all recursive functions.
While these equations can be intriguing and valuable in their own right, users
of FOMC algorithms typically expect a numerical answer. Thus, \cref{sec:cpp}
describes how \Cranetwo{} compiles these equations into a C++ program that one
can execute with different command-line arguments to compute the model count for
various combinations of domain sizes.

\subsection{Completing the Definitions of Functions}\label{sec:completing}

\begin{algorithm}[t]
  \caption{\protect\CompileWithBaseCases{$\phi$}}\label{alg:compilewithbasecases}
  \KwIn{sentence $\phi$}
  \KwOut{set $\mathcal{E}$ of equations}
  $(\mathcal{E}, \mathcal{F}, \mathcal{D}) \gets \Compile{$\phi$}$\;
  $\mathcal{E} \gets \Simplify{$\mathcal{E}$}$\;\label{line:second}
  \ForEach{base case $f(\mathbf{x}) \in \FindBaseCases{$\mathcal{E}$}$}{\label{line:loopy}
    $\psi \gets \mathcal{F}(f)$\;
    \lForEach{index $i$ such that $x_{i} \in \mathbb{N}_{0}$}{$\psi \gets \Propagate{$\psi$, $\mathcal{D}(f, i)$, $x_i$}$}\label{line:propagate}
    $\mathcal{E} \gets \mathcal{E} \cup \CompileWithBaseCases{$\psi$}$\;\label{line:final}
  }
\end{algorithm}

\Cref{alg:compilewithbasecases} presents our overall approach for compiling a
sentence into equations that include the necessary base cases. First, we use
\textsc{Crane} to compile the sentence into three components: $\mathcal{E}$,
$\mathcal{F}$, and $\mathcal{D}$ (as described in \cref{sec:compilation}). After
some algebraic simplifications (described below), the algorithm passes
$\mathcal{E}$ to the \FindBaseCases procedure (see \cref{sec:identifying}). For
each base case $f(\mathbf{x})$, we retrieve the sentence $\mathcal{F}(f)$
associated with the function name $f$ and simplify it using the \Propagate
procedure (explained in detail in \cref{sec:propagating}). We do this by
iterating over all indices of $\mathbf{x}$, where $x_{i}$ is a constant, and
using \Propagate to simplify $\psi$ by assuming that domain $\mathcal{D}(f, i)$
has size $x_{i}$. Finally, on \autoref{line:final}, \CompileWithBaseCases
recurses on these simplified sentences and adds the resulting base case
equations to $\mathcal{E}$.

\paragraph*{Simplify}
The main responsibility of the \Simplify procedure is to handle the algebraic
pattern $\sum_{m=0}^{n}[a \le m \le b] f(m)$. Here, $n$ is a variable, $a$ and
$b$ are constants, and $f$ is an expression that may depend on $m$. \Simplify
transforms this pattern into $f(a) + f(a+1) + \cdots + f(\min\{\, n, b \,\})$.

\begin{example}\label{example:bijectionstwo}
  {\par\sloppy
  We return to the bijection-counting problem from \cref{example:bijections} and
  its initial solution described in \cref{example:solution}. \Simplify
  transforms
  \[
    g(l, m) = \sum_{k=0}^{m}[0 \le k \le 1]\binom{m}{k}g(l-1, m-k)
  \]
  into
  \[
    g(l, m) = g(l-1, m) + mg(l-1, m-1).
  \]
  Then \FindBaseCases identifies two base cases: $g(0, m)$ and $g(l, 0)$. In
  both cases, \CompileWithBaseCases recurses on the sentence $\mathcal{F}(g)$
  simplified by assuming that one of the domains is empty. In the first case, we
  recurse on the sentence $\forall x \in \Gamma\text{. }S(x) \lor \neg S(x)$,
  where $S$ is a predicate introduced by preprocessing with weights
  $w^{+}(S) = 1$ and $w^{-}(S) = -1$. Hence, we obtain the base case
  $g(0, m) = 0^{m}$. In the case of $g(l, 0)$, \Propagate{$\psi$, $\Gamma$, $0$}
  returns an empty sentence, resulting in $g(l, 0) = 1$. While these base cases
  overlap when $l = m = 0$, they remain consistent since $0^{0} = 1$.
  \par}
\end{example}

Generally, let $\phi$ be a sentence with two domains $\Gamma$ and $\Delta$, and
let $n, m \in \mathbb{N}_{0}$. Then the FOMC of \Propagate{$\phi$, $\Delta$,
  $n$}, assuming $|\Gamma| = m$, is the same as the FOMC of \Propagate{$\phi$,
  $\Gamma$, $m$}, assuming $|\Delta| = n$.

Although \CompileWithBaseCases starts with a call to \textsc{Crane}, the
proposed algorithm is not merely a post-processing step for FOKC because
\cref{alg:compilewithbasecases} is recursive and can issue additional calls to
\textsc{Crane} on various derived sentences. We conclude this section by
bounding the recursion depth of the \CompileWithBaseCases procedure, thereby
also proving that the algorithm terminates.

\begin{theorem}
  Let $\phi$ be a sentence with $n$ domains. The maximum recursion depth of
  \CompileWithBaseCases{$\phi$} is then $n$.
\end{theorem}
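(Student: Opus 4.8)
The key quantity to track is the number of domains appearing in the sentence passed to each recursive invocation of `\CompileWithBaseCases`. The plan is to show that every recursive call strictly decreases this count by at least one, so that starting from $n$ domains we can recurse at most $n$ times before reaching a sentence with no domains (on which `\Compile` produces no recursive functions and hence no further base cases).

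Let me sketch the argument.

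The plan is to prove the statement by induction on the number of domains $n$ of the input sentence, establishing that `\CompileWithBaseCases{$\phi$}` recurses to depth at most $n$. The only self-call occurs on \autoref{line:final}, applied to the sentence $\psi$ that is built from $\mathcal{F}(f)$ by the \Propagate loop on \autoref{line:propagate}. Hence it is enough to understand how the number of domains changes as we pass from $\phi$ to $\psi$ in a single recursive step and to show it strictly decreases. For the base case $n = 0$, I would argue that a sentence with no domains has no variables ranging over any domain, so \textsc{Crane} produces no recursive function; consequently \FindBaseCases returns the empty set, the loop on \autoref{line:loopy} is never entered, and the depth is $0 = n$.

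For the inductive step, I would fix $\phi$ with $n \ge 1$ domains and consider an arbitrary base case $f(\mathbf{x})$ returned by \FindBaseCases. Two facts drive the argument. First, by \cref{def:basecase} a base case replaces at least one variable $x_i$ by a constant, so the loop on \autoref{line:propagate} runs at least once; each run invokes \Propagate on the domain $\mathcal{D}(f, i)$, which eliminates every variable quantified over that domain and thereby removes it from the sentence. Second — the key structural fact — the sentence $\mathcal{F}(f)$ has at most $n$ domains. Combining the two, $\psi$ has at most $n - 1$ domains, so by the induction hypothesis `\CompileWithBaseCases{$\psi$}` recurses to depth at most $n - 1$, giving depth at most $1 + (n-1) = n$ for $\phi$, as required.

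I expect the main obstacle to be the second fact: that \textsc{Crane} never attaches to a function $f$ a sentence $\mathcal{F}(f)$ with more domains than $\phi$. This is not immediate, since compilation can introduce fresh domains — for instance, the domain $\Delta^\top$ in \cref{example:solution}. The resolution I would pursue is to observe that the only rule able to introduce a new domain is domain recursion, which substitutes a strictly smaller sub-domain for an existing one rather than adding a domain alongside it, while every other compilation rule leaves the set of domains untouched or shrinks it. A rule-by-rule check that the number of distinct domains is non-increasing under compilation then yields $|\mathrm{domains}(\mathcal{F}(f))| \le n$ and closes the induction.
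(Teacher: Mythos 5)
Your proof follows essentially the same route as the paper's: induction on the number of domains, with \cref{observation1} guaranteeing that each base case triggers at least one \Propagate{} call and \cref{observation2} guaranteeing that this call eliminates the corresponding domain, so that $\psi$ has at most $n-1$ domains and the inductive hypothesis applies. The only difference is that you explicitly flag and discharge the fact that $\mathcal{F}(f)$ has no more domains than $\phi$ (fresh domains such as $\Delta^{\top}$ replace existing ones rather than being added alongside them), a step the paper's proof relies on implicitly without comment.
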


The proof of this theorem relies on two observations regarding the algorithms
presented in \cref{sec:identifying,sec:propagating}.

\begin{observation}\label{observation1}
  Each base case returned by \FindBaseCases contains at least one constant (in
  line with \cref{def:basecase}).
\end{observation}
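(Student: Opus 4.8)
The plan is to derive the statement almost immediately from \cref{def:basecase}, reducing it to a structural fact about the output of \FindBaseCases. By \cref{def:basecase}, a function call $f(\mathbf{y})$ is a base case of a signature $f(\mathbf{x})$ precisely when $f(\mathbf{y}) = f(\mathbf{x})\sigma$ for a substitution $\sigma$ replacing \emph{one or more} of the variables $x_{1}, \dots, x_{n}$ with constants. The phrase ``one or more'' is the crux: any such $\sigma$ places at least one constant into the argument tuple, so every object that qualifies as a base case under \cref{def:basecase} already contains at least one constant. Hence the observation follows the moment we confirm that each call returned by \FindBaseCases is genuinely a base case in this sense --- that is, that \FindBaseCases builds its outputs by applying a nontrivial such substitution $\sigma$ to a signature.

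To make this precise, I would first fix the interface of \FindBaseCases (a set $\mathcal{E}$ of equations in, a set of function calls out) and then inspect how each returned call is formed. My expectation, based on the stated purpose of preventing infinite recursion, is that \FindBaseCases locates the boundary values of the recursion variables --- the least value each decreasing argument can reach, typically $0$ --- and pins exactly those arguments to the corresponding constant while copying the remaining arguments verbatim from the signature. The proof then proceeds by a trivial case split on each position of a returned call: a position is either copied from $f(\mathbf{x})$ (and so remains a variable) or is set to a boundary constant. Since \FindBaseCases emits a call only after fixing at least one recursion variable, at least one position falls into the second case, so $\sigma$ is nontrivial and introduces a constant, exactly as \cref{def:basecase} demands.

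The main obstacle is that the construction details of \FindBaseCases reside in \cref{sec:identifying}, so the argument depends on a careful reading of that procedure rather than on \cref{def:basecase} in isolation. Concretely, I must rule out the degenerate case in which \FindBaseCases emits a call identical to the signature $f(\mathbf{x})$ (with $\sigma$ the identity and no constant introduced); such an all-variable call would be useless as a base case and would also violate the observation. I expect this to be immediate from the fact that a call is recorded only once some recursion variable has been fixed to its boundary value, but turning ``I expect'' into a proof requires verifying that no all-variable signature is ever added to the output of \FindBaseCases.
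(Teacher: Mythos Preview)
Your proposal is correct and matches the paper's treatment: the paper states this as an \emph{observation} with no accompanying proof, since it is immediate from inspecting \cref{alg:findbasecases}. The only places anything is added to $\mathcal{B}$ are \autoref{line:constant} and \autoref{line:nonconstant}, and both explicitly apply a substitution $[x_{i} \mapsto c]$ with $c \in \mathbb{N}_{0}$ to the signature $f(\mathbf{x})$; hence the degenerate all-variable output you worry about cannot occur, and your deferred verification is a one-line check.
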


\begin{observation}\label{observation2}
  For any sentence $\phi$, domain $\Delta$, and $n \in \mathbb{N}_{0}$,
  \Propagate{$\phi$, $\Delta$, $n$} returns a sentence with no variables
  quantified over $\Delta$.
\end{observation}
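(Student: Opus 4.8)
The plan is to reason directly from the operational description of \Propagate given above: on input $(\phi, \Delta, n)$, it works under the assumption $|\Delta| = n$, introduces $n$ fresh constants $c_{1}, \dots, c_{n}$ that by the unique name assumption enumerate the elements of $\Delta$, and then eliminates every variable quantified over $\Delta$ by grounding it against these constants. I would prove the observation by establishing the invariant that, once a clause of $\phi$ has been processed, none of the clauses it gives rise to contains a quantifier of the form $\forall x \in \Delta$; since a sentence has no free variables, this is exactly the statement that no variable is quantified over $\Delta$.

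First I would recall that each clause has the shape $\forall x_{1} \in \Delta_{1}\text{. }\dots\text{ }\forall x_{k} \in \Delta_{k}\text{. }\psi$, and split its bound variables into those with domain $\Delta$ and those with some other domain. The grounding step then replaces the block of $\Delta$-quantifiers by a conjunction ranging over all assignments of the $\Delta$-variables to $c_{1}, \dots, c_{n}$; this is sound precisely because, under $|\Delta| = n$, the constants $c_{1}, \dots, c_{n}$ exhaust $\Delta$. After this substitution, each former $\Delta$-variable has been replaced throughout $\psi$ by one of the ground constants $c_{j}$, each of which contains no variable, so the resulting clauses carry neither a $\forall \cdot \in \Delta$ quantifier nor a variable that ranged over $\Delta$. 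The quantifiers over the other domains are left untouched, which is consistent with the observation, as it constrains only $\Delta$. Taking the conjunction over all clauses of $\phi$ then yields the invariant globally.

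The first point needing care is the boundary case $n = 0$: here there are no constants, the conjunction over the empty set of assignments is empty, and so any clause quantifying over $\Delta$ collapses to $\top$ and is discarded --- matching the behaviour in \cref{example:bijectionstwo}, where \Propagate{$\psi$, $\Gamma$, $0$} returns the empty sentence. This remains correct even when such a clause also quantifies over another domain, since a universal quantifier over an empty domain already renders the whole clause vacuous. The genuine obstacle, however, is the interaction with the smoothing step that \Propagate uses to preserve the model count (\cref{sec:smoothing}): because smoothing emits additional clauses, I must verify that none of them reintroduces a variable quantified over $\Delta$. I would discharge this by checking the form of the clauses smoothing can produce and arguing that each is either itself grounded over $c_{1}, \dots, c_{n}$ or mentions only predicates whose $\Delta$-argument positions are already filled by these constants; establishing that smoothing and grounding compose so as to maintain the invariant is the crux of the proof, after which the observation follows immediately.
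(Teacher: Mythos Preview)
The paper does not prove this statement; it is recorded as an \emph{observation}, i.e., an immediate consequence of inspecting \cref{alg:propagate}. Your case analysis on $n>0$ versus $n=0$ is exactly the right shape and matches how one would read off the observation from the pseudocode, so in that sense your plan is aligned with the paper.

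That said, your treatment of smoothing is slightly muddled and more elaborate than necessary. Smoothing (lines~\ref{line:presmoothing}--\ref{line:smoothing2}) is executed only in the $n=0$ branch, so there are no constants $c_{1},\dots,c_{n}$ to speak of there; your proposed discharge---arguing that the smoothed clause is ``grounded over $c_{1},\dots,c_{n}$'' or has ``$\Delta$-argument positions already filled by these constants''---does not type-check against the algorithm. The actual argument is one line: the set $C'$ is \emph{defined} as $\{\, l \in C \mid \Delta \notin \Doms(l) \,\}$, the chosen literal $l$ lies in $C'$, and negation does not change $\Doms$, so the emitted clause $C' \cup \{\neg l\}$ contains no literal whose domains include $\Delta$ and hence no variable quantified over $\Delta$. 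Also, in the $n=0$ case a clause with some $\Delta$-literals and some non-$\Delta$-literals is not simply ``discarded as $\top$''; it contributes the tautological clause $C' \cup \{\neg l\}$, which is precisely the smoothing output you need to account for. Once you replace your smoothing paragraph with this direct appeal to the definition of $C'$, the invariant is immediate and your proof is complete.
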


\begin{proof}
  We proceed by induction on $n$. If $n=0$, then $\phi$ is essentially a
  propositional formula, and \textsc{Crane} compiles it into an equation of the
  form $f = \expr$ with no `function calls'. Suppose that---for all sentences
  with at most $n$ domains---\CompileWithBaseCases terminates with a recursion
  depth of at most $n$. Let $\phi$ be a sentence with $n+1$ domains. By
  \cref{observation1}, each base case on \autoref{line:loopy} of
  \cref{alg:compilewithbasecases} has at least one constant. Therefore, by
  \cref{observation2}, after \autoref{line:propagate}, the sentence $\psi$ has
  at most $n$ domains. Thus, \autoref{line:final} terminates with a recursion
  depth of at most $n$ by the inductive hypothesis, completing the inductive
  proof.
\end{proof}

\subsection{Identifying a Sufficient Set of Base Cases}\label{sec:identifying}

\begin{algorithm}[t]
  \caption{\protect\FindBaseCases{$\mathcal{E}$}}\label{alg:findbasecases}
  \KwIn{set $\mathcal{E}$ of equations}
  \KwOut{set $\mathcal{B}$ of base cases}

  $\mathcal{B} \gets \emptyset$\;
  \ForEach{function call $f(\mathbf{y})$ on the RHS of an equation in $\mathcal{E}$}{
    $\mathbf{x} \gets \text{the parameters of $f$ in its definition}$\;
    \ForEach{$y_{i} \in \mathbf{y}$} {
      \lIf{$y_{i} \in \mathbb{N}_{0}$}{$\mathcal{B} \gets \mathcal{B} \cup \{\, f(\mathbf{x})[x_{i} \mapsto y_{i}] \,\}$}\label{line:constant}
      \ElseIf{$y_{i} = x_{i} - c_{i}$ for some $c_{i} \in \mathbb{N}_{0}$}{
        \lFor{$j \gets 0$ \KwTo $c_{i} - 1$}{$\mathcal{B} \gets \mathcal{B} \cup \{\, f(\mathbf{x})[x_{i} \mapsto j] \,\}$}\label{line:nonconstant}
      }
    }
  }
\end{algorithm}

\Cref{alg:findbasecases} summarises the implementation of \FindBaseCases. It
considers two types of arguments when a function $f$ calls itself recursively:
constants and arguments of the form $x_{i} - c_{i}$. When the argument is a
constant $c_{i}$, \autoref{line:constant} adds a base case with $c_{i}$ to the
set of all base cases $\mathcal{B}$. In the second case,
\autoref{line:nonconstant} adds a base case to $\mathcal{B}$ for each constant
from $0$ to (but not including) $c_{i}$.

\begin{example}
  Consider the recursive function $g$ from \cref{example:solution}.
  \FindBaseCases iterates over two function calls: $g(l-1, m)$ and
  $g(l-1, m-1)$. The former produces the base case $g(0, m)$, while the latter
  produces both $g(0, m)$ and $g(l, 0)$.
\end{example}

In the rest of this section, we will show that the base cases identified by
\FindBaseCases are sufficient for the algorithm to terminate.\footnote{Note that
  characterising the fine-grained complexity of the solutions found by
  \Cranetwo{} or other FOMC algorithms is an emerging area of research. These
  questions have been partially addressed in previous
  work~\cite{DBLP:conf/kr/DilkasB23,DBLP:conf/kr/TothK24} and are unrelated to
  the goals of this section.}

\begin{theorem}\label{thm:halting}
  Let $\mathcal{E}$ denote the set of equations returned by
  \CompileWithBaseCases, and let $f$ be one of the functions defined in
  $\mathcal{E}$. The evaluation of $f(\mathbf{x})$ terminates for any tuple
  $\mathbf{x}$ of non-negative integers (of appropriate length).
\end{theorem}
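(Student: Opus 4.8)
The plan is to exhibit a well-founded measure on function calls and to show that every call made during the evaluation of $f(\mathbf{x})$ strictly decreases this measure, so that only finitely many calls can occur. I would combine three ingredients: an induction on the number of domains (mirroring the recursion-depth bound established above), an acyclicity argument for calls between distinct functions, and an argument-based measure that controls each function's self-recursion. Concretely, I would order every call $f(\mathbf{x})$ lexicographically by the triple consisting of (i)~the number of domains of the sentence from which $f$ was compiled, (ii)~a topological rank of $f$ in the dependency DAG of $\mathcal{E}$, and (iii)~the quantity $\mu_f(\mathbf{x}) = \sum_{i \in I_f} x_i$, where $I_f$ collects the argument indices that appear in the strictly-decreasing form $x_i - c_i$ (with $c_i > 0$) in some self-call of $f$.

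The key steps, in order, are as follows. First I would make precise how evaluation proceeds at an all-integer point: base-case equations (those with at least one constant argument, per \cref{def:basecase} and \cref{observation1}) take priority, and the general recursive equation is used only when no base case matches. Second, I would show that no call with a negative argument is ever produced: if a self-call uses the decreasing argument $x_i - c_i$ and $x_i < c_i$, then $x_i \in \{\,0, \dots, c_i - 1\,\}$, so the current point already matches one of the base cases added on \autoref{line:nonconstant} of \cref{alg:findbasecases}; hence whenever the general equation is actually applied, every such argument satisfies $x_i \ge c_i$ and stays in $\mathbb{N}_{0}$. Third, I would split the self-calls appearing in the general equation into those that fix some argument to a constant and those that do not. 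A call of the former kind matches the corresponding base case added on \autoref{line:constant}, and that base case was compiled—via \autoref{line:propagate} and \autoref{line:final}—from a sentence with strictly fewer domains (by \cref{observation2}), so it is handled by the induction hypothesis. A call of the latter kind leaves every argument either unchanged or strictly decreased, and decreases at least one index in $I_f$; therefore it strictly decreases $\mu_f$ while keeping it in $\mathbb{N}_{0}$, and only finitely many such calls can occur before a base case is reached. Finally, calls to distinct functions either decrease the domain count or descend in the dependency DAG, closing the induction.

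The main obstacle I anticipate is disentangling the interaction between the three components of the measure, since a single function can simultaneously recurse on itself (keeping the domain count fixed) and invoke other functions, and a base case obtained by fixing one argument can still be recursive in its remaining arguments. Making the lexicographic order genuinely well-founded requires that the dependency relation among distinct functions be acyclic—which follows from the DAG structure of the circuits that \textsc{Crane} produces—and that every self-call decrease at least one argument, so that $I_f \neq \emptyset$ and $\mu_f$ is strictly decreasing rather than merely non-increasing. I would therefore isolate this last fact as a structural property of the equations emitted by \textsc{Crane} and verify that \cref{alg:findbasecases} indeed supplies a base case for each way in which such a decrease can bottom out.
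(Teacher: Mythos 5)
Your argument is sound, but it genuinely differs from the paper's own proof, which proceeds by a \emph{double induction}: first on the number of functions in $\mathcal{E}$ (using the topological ordering of \cref{assumption2}, exactly your component~(ii)), and then, for the last function in that ordering, on its \emph{arity} --- a base case of an $n$-ary function is regarded as a function of arity $n-1$, with the unary case settled separately (\cref{lemma:oneunary}) by observing that the call sequence $f(x), f(x-c), f(x-2c), \dots$ must hit the base case $f(x \bmod c)$ supplied by \autoref{line:nonconstant}. The paper never inducts on the number of domains; your component~(i) imports that idea from the recursion-depth theorem instead, and it does real work: it lets you treat a base case fixed by \autoref{line:constant} as living in a sub-compilation with strictly fewer domains (\cref{observation2}), where the paper's arity induction handles the same situation without reference to domains. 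Your measure $\mu_f$ also makes explicit two points the paper compresses: that arguments never go negative (because a point with $x_i < c_i$ already matches a base case from \autoref{line:nonconstant}, so the general equation is never applied there), and that the multi-argument recursive step strictly decreases --- the paper's \cref{lemma:onefunction} dispatches this with a terse appeal to \cref{assumption3} and the analogy to \cref{lemma:oneunary}, whereas your sum over the decreasing indices $I_f$ (guaranteed nonempty by \cref{assumption4}) gives a clean one-step argument. One bookkeeping point you correctly anticipate but should carry through: component~(i) must be attached to \emph{equations} (or compilation instances), not to function names, since a base-case equation of $f$ belongs to the reduced sentence's compilation yet still carries the name $f$ and may recurse in its remaining variables; relatedly, a self-call that fixes some argument to a constant can \emph{increase} a coordinate in $I_f$, so $\mu_f$ is not monotone on all self-calls --- your routing of every constant-fixing call through the domain-count induction rather than through $\mu_f$ is exactly what keeps the lexicographic measure well-founded, and with that granularity fixed, your proof goes through.
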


We prove \cref{thm:halting} using double induction. First, we apply induction to
the number of functions in $\mathcal{E}$. Then, we use induction on the arity of
the `last' function in $\mathcal{E}$ according to a topological ordering. Before
proving \cref{thm:halting}, we make a few observations that are immediate
consequences of this and previous
work~\cite{DBLP:conf/kr/DilkasB23,DBLP:conf/ijcai/BroeckTMDR11}.

\begin{observation}\label{assumption1}
  For each function $f$, there is precisely one equation $e \in \mathcal{E}$
  with $f(\mathbf{x})$ on the LHS where all $x_{i}$'s are variables (i.e., $e$
  is not a base case). We refer to $e$ as the \emph{definition} of $f$.
\end{observation}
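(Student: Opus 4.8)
The statement bundles two claims: \emph{existence} (every function $f$ occurring in $\mathcal{E}$ has at least one equation whose left-hand side is a signature) and \emph{uniqueness} (it has at most one such equation). The plan is to prove both simultaneously by induction on the recursion depth of \CompileWithBaseCases, which the preceding theorem guarantees is finite (bounded by the number of domains). The single external fact I would rely on is a structural invariant of \textsc{Crane} itself: when \Compile{$\psi$} returns $(\mathcal{E}', \mathcal{F}', \mathcal{D}')$, every function named in $\mathcal{E}'$ is introduced together with exactly one defining equation whose left-hand side is a signature, and $\mathcal{E}'$ contains no base-case equations at all. This is the sense in which the statement is an ``immediate consequence of previous work'': it holds verbatim for the raw output of \textsc{Crane}, and the only thing left to check is that \CompileWithBaseCases preserves it.

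For the inductive step I would trace \cref{alg:compilewithbasecases} line by line. The initial call to \Compile establishes the invariant by the structural fact above. \Simplify (\autoref{line:second}) rewrites only right-hand sides; it neither adds nor deletes equations nor alters any left-hand side, so the invariant survives. In the loop, each base case $f(\mathbf{x})$ returned by \FindBaseCases contains at least one constant by \cref{observation1} (in line with \cref{def:basecase}); hence every equation produced for it has a left-hand side containing a constant and is therefore \emph{not} a signature, so it cannot serve as a second definition of $f$ and does not threaten uniqueness. The recursive call on \autoref{line:final} returns, by the inductive hypothesis, a set of equations in which each function already has precisely one signature-defined definition; its ``solution'' function is identified with the constant-bearing call $f(\mathbf{x})$ (so that definition becomes a base-case equation for $f$), while any auxiliary functions it introduces are fresh names disjoint from those already in $\mathcal{E}$, each carrying its own unique definition. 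Combining these, no step ever creates a second signature equation for an existing function, and every function that appears---including on a right-hand side---was introduced with a definition, which yields existence. This closes the induction.

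I expect the main obstacle to be bookkeeping rather than any deep argument: precisely formalising \textsc{Crane}'s structural invariant (a property of the pre-existing algorithm, not re-derived here) and, more delicately, pinning down how the recursive call's solution function is renamed to the base case $f(\mathbf{x})$ while guaranteeing that all other functions it emits receive globally fresh names. If name collisions were possible, the recursion could in principle contribute a second signature equation for an existing $f$; ruling this out---by insisting on fresh naming at each recursive invocation---is the one place where the proof must be stated with care rather than treated as immediate.
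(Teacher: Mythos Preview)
The paper does not prove this statement at all: it is listed among ``a few observations that are immediate consequences of this and previous work'' and is simply asserted. Your proposal is therefore strictly more detailed than what the paper offers. Your inductive argument on the recursion depth of \CompileWithBaseCases{} is a reasonable way to make the observation precise, and you correctly isolate the two ingredients the paper leaves implicit: (i) the structural invariant of \textsc{Crane}'s raw output (one signature equation per function, no base-case equations), and (ii) the renaming convention whereby the recursive call's solution function is identified with the constant-bearing call $f(\mathbf{x})$ while auxiliary functions receive fresh names. The paper sweeps both under the rug; you make them explicit and flag the second as the only delicate point, which is accurate.

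One caution: the renaming step you rely on is not visible in \cref{alg:compilewithbasecases} as written---\autoref{line:final} is a plain union, so the identification of the recursive call's solution function with $f(\mathbf{x})$ (and the freshness of auxiliary names) is an implementation convention rather than something the pseudocode enforces. Your proof is therefore conditional on that convention, which you state clearly, but a reader comparing your argument to the algorithm text alone might wonder where the renaming happens. This is not a gap in your reasoning so much as a gap in the paper's presentation that your proof has to paper over.
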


\begin{observation}\label{assumption2}
  There is a \emph{topological ordering} ${(f_{i})}_{i}$ of all functions in
  $\mathcal{E}$ such that equations in $\mathcal{E}$ with $f_{i}$ on the LHS do
  not contain function calls to $f_{j}$ with $j > i$.
\end{observation}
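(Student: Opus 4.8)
The plan is to recast the statement as an acyclicity property of a call graph and then exhibit a rank that strictly decreases along every call between distinct functions. Let $G$ be the directed graph whose vertices are the functions defined in $\mathcal{E}$ and which has an edge $f \to g$ for distinct $f, g$ precisely when some equation with $f$ on the LHS contains a function call to $g$ on the RHS; self-loops arising from recursion are discarded. A topological ordering of the asserted kind exists if and only if $G$ is acyclic: ordering the vertices by a linear extension of the reachability order induced by $G$ (placing each function after all distinct functions it calls) yields $(f_i)_i$ such that $f_i$ calls $f_j$ only when $j \le i$, since the discarded self-loops correspond exactly to the permitted case $j = i$, whereas a directed cycle through distinct vertices would obstruct any such ordering. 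Thus it suffices to show $G$ has no directed cycle through distinct functions, which I would do by constructing a rank $\rho$ into a totally ordered set with $\rho(g) < \rho(f)$ for every edge $f \to g$.

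First I would treat a single invocation of \textsc{Crane}. As established in prior work on first-order knowledge compilation~\cite{DBLP:conf/kr/DilkasB23,DBLP:conf/ijcai/BroeckTMDR11}, one run of \textsc{Crane} compiles a sentence into a circuit-like structure that is itself a directed acyclic graph: each function it introduces corresponds to a node, and a function call on the RHS of a definition corresponds either to an edge to a strict descendant of that node or to a self-reference introduced by domain recursion. Consequently, the inter-function calls generated by a single compilation already admit a strict topological order, and I would take $\rho_{\mathrm{circ}}(f)$ to be the position of $f$ in this order, so that within one compilation every edge $f \to g$ between distinct functions satisfies $\rho_{\mathrm{circ}}(g) < \rho_{\mathrm{circ}}(f)$.

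Next I would account for the recursion of \CompileWithBaseCases. The returned set $\mathcal{E}$ is the union, over the finite recursion tree, of the equation sets produced by the individual \textsc{Crane} calls, and new functions are only ever introduced by these calls. To each function I would assign a primary rank equal to the number of domains of the sentence whose compilation introduced it. By \cref{observation1,observation2}, together with the theorem bounding the recursion depth of \CompileWithBaseCases, every recursive call on \autoref{line:final} operates on a sentence with strictly fewer domains than its parent; hence a function introduced deeper in the recursion receives a strictly smaller primary rank. The only functions appearing across levels are the recursive functions that receive base-case equations in a parent while being defined in a child, and the calls occurring in such a base-case equation point only into the child subproblem, i.e., to functions of strictly smaller primary rank. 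Setting $\rho(f) \coloneqq (\text{primary rank of } f,\ \rho_{\mathrm{circ}}(f))$ ordered lexicographically then gives $\rho(g) < \rho(f)$ for every inter-function edge $f \to g$: across compilations the primary coordinate strictly decreases, and within a compilation the primary coordinates agree while the circuit coordinate strictly decreases.

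The main obstacle, and the step I would treat most carefully, is the bookkeeping of function ownership across recursion levels---in particular, verifying that no base-case equation added by a parent compilation induces a call edge into a function of equal or larger rank belonging to an ancestor, and that a recursive function shared between a parent and a child does not close a cycle. Once it is confirmed that every call either stays within a single acyclic compilation or descends to a strictly smaller domain count, the lexicographic rank $\rho$ is well defined and strictly decreasing on every inter-function edge, so $G$ is acyclic and the desired topological ordering exists.
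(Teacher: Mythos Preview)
The paper does not give a formal proof here; the observation is followed by a single sentence of justification: it ``arises from the timeline of FOKC and the fact that, while constructing a graph, the algorithm cannot draw an edge to a vertex that does not yet exist.'' In other words, the paper's argument is purely temporal---function symbols are introduced one by one during compilation, and any equation written down can only reference symbols already created, so the creation order (suitably reversed) is the topological order. Your approach is correct but takes a more structural route: you build an explicit lexicographic rank (number of domains of the subproblem that introduced the function, then position within that subproblem's circuit DAG) and show every inter-function edge strictly decreases it. This is considerably more elaborate than what the paper offers, but it buys something real: it makes the cross-level behaviour of \CompileWithBaseCases{} explicit, whereas the paper's one-liner is really phrased for a single FOKC graph construction and leaves implicit why a child invocation of \textsc{Crane} cannot reference functions from an ancestor (namely, because each child call starts fresh). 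The paper's justification is short and operational; yours is longer but would survive closer scrutiny of the recursion, with the bookkeeping caveat you already flag being the only place requiring care.
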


\Cref{assumption2} prevents mutual recursion and other cyclic scenarios. It
arises from the timeline of FOKC and the fact that, while constructing a graph,
the algorithm cannot draw an edge to a vertex that does not yet exist.

\begin{observation}\label{assumption3}
  For each equation $(f(\mathbf{x}) = \expr) \in \mathcal{E}$, the evaluation
  of $\expr$ terminates when provided with the values of all relevant function
  calls.
\end{observation}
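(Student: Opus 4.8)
The plan is to prove \cref{assumption3} by structural induction on the algebraic expression $\expr$, treating each relevant function call as an already-known constant, since its value is supplied by the hypothesis of the observation. First I would recall from \cref{sec:algebra} the grammar of the expressions that \textsc{Crane} emits: every $\expr$ is built from constants (non-negative integers), variables (function parameters and summation indices), and function calls, combined through a fixed, finite repertoire of total arithmetic primitives — addition, multiplication, integer powers, binomial coefficients $\binom{\cdot}{\cdot}$, the indicator $[a \le n \le b]$, and bounded summation $\sum_{i=a}^{b}$. Because the statement concerns the evaluation of a single equation $f(\mathbf{x}) = \expr$ at a fixed tuple $\mathbf{x}$ of non-negative integers, every variable free in $\expr$ is a parameter of $f$ and is therefore bound to a concrete non-negative integer at evaluation time.

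For the base of the induction, a constant evaluates to itself, a variable evaluates to the non-negative integer currently bound to it, and a function call evaluates in one step to the value provided by hypothesis; all terminate trivially. For the non-summation inductive cases — $+$, $\cdot$, powers, $\binom{\cdot}{\cdot}$, and the indicator $[a \le n \le b]$ — each node has finitely many immediate subexpressions, each of which terminates by the inductive hypothesis, after which a single primitive operation on finite integer inputs produces the result. Here it matters that every primitive is \emph{total}: in particular the conventions $0^{0} = 1$ and $\binom{n}{k} = 0$ for $k > n$ guarantee that no primitive can diverge or fault on a legal integer input.

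The one genuinely load-bearing case is bounded summation $\sum_{i=a}^{b} \expr'$. By the inductive hypothesis the bound expressions $a$ and $b$ evaluate to concrete non-negative integers, since every variable occurring in them is either a parameter of $f$ (fixed to a non-negative integer) or an enclosing summation index (fixed while that outer sum iterates). Consequently the index $i$ ranges over at most $\max\{0, b - a + 1\}$ values; for each value $v$ the summand $\expr'[i \mapsto v]$ is a strictly smaller subexpression and thus terminates by the inductive hypothesis. A finite sum of finitely many terminating terms terminates, closing the induction.

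The main obstacle is precisely to rule out an unbounded or ill-founded summation range, as that is the only construct capable of causing non-termination once all function-call values are given. I would discharge this by appealing to the form of \textsc{Crane}'s output (and to the previous work it rests on): every summation it produces carries an explicit integer lower bound and an upper bound that is an expression in the parameters, and \Simplify only ever \emph{shortens} such a range — as in \cref{example:bijectionstwo}, where $\sum_{k=0}^{m}[0 \le k \le 1]\binom{m}{k}g(l-1, m-k)$ collapses to two terms. Hence the index set is always finite, the summation case goes through, and the structural induction establishes the observation.
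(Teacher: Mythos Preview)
Your proposal is correct, but it takes a different route from the paper. The paper's justification of \cref{assumption3} is a single sentence: each $\expr$ corresponds to a node type in the FOKC graph, and all such node types are finite computations already catalogued in the prior work on \textsc{Crane} and \textsc{ForcLift}. In other words, the paper treats the observation as an immediate consequence of how the expressions arise from compilation, deferring the actual enumeration of cases to the cited references. You instead work directly with the algebraic grammar sketched in \cref{sec:algebra} and carry out an explicit structural induction, isolating bounded summation as the only non-trivial case and arguing that its index range is finite because the bounds evaluate to concrete non-negative integers. Your argument is more self-contained and more rigorous than what the paper provides; the trade-off is that you must reconstruct the expression grammar from the examples (the paper never states it exhaustively), whereas the paper sidesteps this by pointing to the fixed list of node types in the literature.
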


Indeed, each $\expr$ is defined by a node type, all of which are finite
computations listed in previous
work~\cite{DBLP:conf/kr/DilkasB23,DBLP:conf/nips/Broeck11,DBLP:conf/ijcai/BroeckTMDR11}.

\begin{corollary}\label{fact}
  If $f$ is a non-recursive function with no function calls on the RHS of its
  definition, then the evaluation of any function call $f(\mathbf{x})$
  terminates.
\end{corollary}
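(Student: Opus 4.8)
The plan is to read this corollary off almost immediately from \cref{assumption3}, since the hypothesis eliminates the only source of non-termination that \cref{assumption3} leaves open, namely the need to first evaluate nested function calls. First I would fix an arbitrary tuple $\mathbf{x}$ of non-negative integers of the appropriate length and make precise what it means to evaluate $f(\mathbf{x})$: by \cref{assumption1}, $f$ has a unique definition $f(\mathbf{x}) = \expr$, so evaluation proceeds by substituting the concrete values of $\mathbf{x}$ for the parameters of $f$ and then evaluating the resulting ground expression $\expr$.

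Next I would invoke the hypothesis that $\expr$ contains no function calls on the RHS of the definition. This means that the set of \emph{relevant function calls} whose values \cref{assumption3} asks us to supply is empty, so the premise of \cref{assumption3} is satisfied vacuously. Consequently, \cref{assumption3} guarantees that the evaluation of $\expr$ terminates, and hence so does the evaluation of $f(\mathbf{x})$. Since $\mathbf{x}$ was arbitrary, the claim follows for every admissible argument tuple.

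The only point requiring a little care — and the step I expect to be the main (if minor) obstacle — is to confirm that a non-recursive $f$ can always be evaluated through its definition alone, without appealing to any base-case equations. Because $f$ is non-recursive, \FindBaseCases never emits a self-referential base case for $f$, so the definition is a total specification of $f$ on all admissible argument tuples; even if base-case equations for $f$ arise from calls made by other functions, evaluating via the definition remains a valid terminating strategy by the reasoning above. With that observation in place, the corollary is an immediate consequence of \cref{assumption1,assumption3}.
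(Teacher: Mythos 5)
Your proposal is correct and matches the paper's intended argument: the paper states this corollary as an immediate consequence of \cref{assumption3} (each $\expr$ is a finite computation given the values of all relevant function calls), and with no function calls on the RHS that premise is vacuously satisfied, exactly as you argue. Your extra care in the final paragraph about base-case equations for $f$ generated by other functions' calls is a reasonable precaution but not part of the paper's (implicit) proof, which simply reads the claim off \cref{assumption1,assumption3}.
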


\begin{observation}\label{fact2}
  For each equation $(f(\mathbf{x}) = \expr{}) \in \mathcal{E}$, if $\mathbf{x}$
  contains only constants, then $\expr{}$ cannot include any function calls to
  $f$.
\end{observation}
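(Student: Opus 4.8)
The plan is to trace how an equation whose argument tuple consists entirely of constants can arise inside \CompileWithBaseCases, and to show that any such equation is produced by compiling a \emph{propositional} sentence, for which \textsc{Crane} emits no function calls at all. First I would invoke \cref{assumption1}: the unique equation with $f$ on its left-hand side whose arguments are all variables is the definition of $f$, so an equation $f(\mathbf{x}) = \expr$ in which $\mathbf{x}$ contains only constants cannot be that definition. Hence, by \cref{def:basecase}, it is a base case, and every base case equation in $\mathcal{E}$ is introduced through the recursive call on \autoref{line:final} of \cref{alg:compilewithbasecases}.

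Next I would examine that recursive call in the situation where $\mathbf{x}$ is all constants. The loop on \autoref{line:propagate} ranges over every index $i$ with $x_{i} \in \mathbb{N}_{0}$, which is now \emph{every} index; hence \Propagate is applied once for each parameter of $f$. Relying on the correspondence that \textsc{Crane} records in $\mathcal{D}$ between the parameters of $f$ and the domains of $\mathcal{F}(f)$ --- together with \cref{observation2}, which guarantees that each \Propagate call removes the variables quantified over its target domain --- I would conclude that the sentence $\psi$ fed to the recursive \CompileWithBaseCases call has no remaining domains; that is, $\psi$ is propositional.

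Finally I would appeal to the base case of the earlier recursion-depth argument: when a sentence has no domains, \textsc{Crane} compiles it into an equation of the form $g = \expr$ containing no function calls. The equation produced for $f(\mathbf{x})$ is precisely such an equation, so its right-hand side is free of function calls; in particular, it contains no call to $f$. This in fact establishes the stronger claim that $\expr$ has no function calls whatsoever, from which the observation follows immediately.

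The step I expect to be the main obstacle is the middle one: establishing rigorously that an all-constant argument tuple corresponds to a fully propagated, domain-free sentence. This hinges on the bijection (maintained via $\mathcal{D}$) between the arity of $f$ and the number of domains of $\mathcal{F}(f)$, and on \cref{observation2} ensuring that each \Propagate call genuinely eliminates its target domain rather than merely renaming or fragmenting it. Once this correspondence is made explicit, the remaining steps reduce to facts already established earlier in the paper.
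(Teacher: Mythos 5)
The paper never actually proves this observation: it is introduced as one of several facts that are ``immediate consequences of this and previous work,'' so there is no official argument to measure you against, and the right question is whether your reconstruction is sound. It is, and it is faithful to what the paper leaves implicit. Your route --- by \cref{assumption1} an all-constant left-hand side cannot be the definition of $f$, so the equation is a base case created on \autoref{line:final} of \cref{alg:compilewithbasecases}; when every $x_{i}$ is a constant, the loop on \autoref{line:propagate} applies \Propagate once per argument, and the correspondence recorded in $\mathcal{D}$ between the arguments of $f$ and the domains of $\mathcal{F}(f)$ then leaves $\psi$ with no quantified domains; \textsc{Crane} compiles such a sentence into a closed-form equation with no function calls --- reuses, for its final step, exactly the $n=0$ base case that the paper itself asserts in the proof of its recursion-depth theorem, so no new claims about \textsc{Crane} are needed. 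The parameter--domain correspondence that you flag as the crux is likewise no extra debt: the paper's recursion-depth proof already relies on the same implicit invariant when it concludes that after \autoref{line:propagate} the sentence $\psi$ has at most $n$ domains (one domain eliminated per constant argument); you merely apply it exhaustively rather than once. Two caveats, neither fatal. First, \cref{observation2} only removes \emph{variables} quantified over the propagated domain; when some constant $x_{i} > 0$, \Propagate introduces new constants of that sort into $\psi$, so ``$\psi$ is propositional'' should carry the paper's own qualifier ``essentially propositional'' --- your conclusion survives, since \textsc{Crane} still emits a call-free closed form for such ground sentences. Second, \cref{alg:compilewithbasecases} never spells out how the equations returned by the recursive call get attached to the left-hand side $f(\mathbf{x})$; like the paper's \cref{example:bijectionstwo}, you tacitly assume the compiled expression becomes the right-hand side of the base case, an informality you share with the paper rather than introduce. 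Note also that your argument proves something strictly stronger (no function calls at all on the right-hand side), and a slight variant --- each recursive compilation generates fresh function names --- would show that no base-case equation, all-constant or not, can call $f$; this is harmless and arguably makes the use of this observation in \cref{lemma:oneunary} cleaner.
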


Additionally, we introduce an assumption about the structure of recursion.

\begin{assumption}\label{assumption4}
  For each equation $(f(\mathbf{x}) = \expr) \in \mathcal{E}$, every recursive
  function call $f(\mathbf{y}) \in \expr$ satisfies the following:
  \begin{itemize}
    \item each $y_{i}$ is either $x_{i} - c_{i}$ or $c_{i}$ for some constant
          $c_{i}$ and
    \item there exists $i$ such that $y_{i} = x_{i} - c_{i}$ for some
          $c_{i} > 0$.
  \end{itemize}
\end{assumption}

Finally, we assume a particular order of evaluation for function calls using the
equations in $\mathcal{E}$; specifically, base cases precede the recursive
definition.

\begin{assumption}
  When multiple equations in $\mathcal{E}$ match a function call
  $f(\mathbf{x})$, preference is given to the equation with the most constants
  on its LHS.
\end{assumption}

With the observations and assumptions mentioned above, we are ready to prove
\cref{thm:halting}. For readability, we divide the proof into several lemmas of
increasing generality.

\begin{lemma}\label{lemma:oneunary}
  Assume that $\mathcal{E}$ consists of just \emph{one unary} function called
  $f$. Then the evaluation of $f(x)$ terminates for any $x \in \mathbb{N}_{0}$.
\end{lemma}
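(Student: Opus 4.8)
The plan is to run strong induction on the non-negative integer value $v$ to which $f$ is applied, after first nailing down the shape of $f$'s definition and of the base cases that \FindBaseCases produces for it. If $f$ is non-recursive (no function call on the RHS of its definition), then the claim is immediate from \cref{fact}, so I may assume that $f$ genuinely recurses.

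First I would use \cref{assumption4} together with the unarity of $f$ to argue that every recursive call in the definition $f(x) = \expr$ has the form $f(x - c)$ with $c > 0$. Since $f$ has a single argument, the requirement that some $y_{i}$ equal $x_{i} - c_{i}$ with $c_{i} > 0$ pins that one argument to the strictly-decrementing form and rules out constant self-calls. Let $c_{\max}$ be the largest such decrement. Inspecting \cref{alg:findbasecases}, the only branch that fires for these calls is \autoref{line:nonconstant}, and the call with decrement $c_{\max}$ makes \FindBaseCases emit the base cases $f(0), f(1), \dots, f(c_{\max}-1)$; hence every constant in $[0, c_{\max}-1]$ is covered. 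By \cref{fact2} each of these base-case equations contains no call to $f$, and since $f$ is the only function in $\mathcal{E}$, their right-hand sides contain no function calls at all, so they terminate by \cref{fact}.

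With the base cases understood, I would prove by strong induction that $f(v)$ terminates for every $v \in \mathbb{N}_{0}$. When $0 \le v < c_{\max}$, the call $f(v)$ matches a base-case equation, which is the one chosen because equations with more constants on the LHS take precedence, and it terminates as above. When $v \ge c_{\max}$, no base case matches, so only the definition applies; each recursive call $f(v - c)$ then satisfies $0 \le v - c < v$ because $0 < c \le c_{\max} \le v$, so the inductive hypothesis yields termination of every such sub-call. Finally, \cref{assumption3} ensures that, once the values of all these calls are supplied, the evaluation of $\expr$ itself terminates, giving termination of $f(v)$.

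The step I expect to be the main obstacle is the interface between \FindBaseCases and the induction: I must verify that the emitted base cases cover precisely the window $[0, c_{\max}-1]$ into which a decrementing recursion can land, so that for $v \ge c_{\max}$ every recursive argument stays non-negative and strictly smaller than $v$ and never drops through to an undefined value. Once that coverage is secured, the remaining pieces are routine appeals to the stated results: \cref{fact2} to rule out recursion inside the base cases, \cref{fact} to terminate them, and \cref{assumption3} to terminate the body of the definition.
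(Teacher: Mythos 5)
Your proof is correct, but it is structured differently from the paper's. The paper does not induct on the argument value: for an arbitrary recursive call with decrement $c > 0$, it follows the same-decrement chain $f(x), f(x-c), f(x-2c), \dots$ and observes that after finitely many steps this chain lands in the base case $f(x \bmod c)$, which exists because \autoref{line:nonconstant} of \cref{alg:findbasecases} emits $f(0), \dots, f(c-1)$ for that decrement; it also separately handles the case of a constant self-call $f(y)$ via the base case from \autoref{line:constant}. Your strong induction on $v$ with the window $[0, c_{\max}-1]$ buys a tangible improvement in rigour: when the RHS contains several recursive calls with distinct decrements, the actual evaluation tree interleaves them, and the paper's argument only tracks chains that reuse a single $c$, whereas your invariant---every recursive argument is non-negative and strictly smaller than $v$ once $v \ge c_{\max}$, and every $v < c_{\max}$ matches an emitted base case---handles the interleaving uniformly, leaving \cref{assumption3} to finish off each evaluation step. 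Your other deviation is also sound: for a unary $f$, the existential clause of \cref{assumption4} forces the single argument of any recursive call to be $x - c$ with $c > 0$, so constant self-calls cannot occur and the paper's extra case for them is redundant (though harmless, and it keeps the proof robust under weakenings of \cref{assumption4}). Both arguments ultimately rest on the same three ingredients---\cref{fact,fact2,assumption3} and the tie-breaking rule that equations with more constants on the LHS are preferred---so the proofs agree on substance while decomposing the termination argument differently.
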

\begin{proof}
  If $f(x)$ matches a base case, then its evaluation terminates by
  \cref{fact,fact2}. If $f$ is not recursive, the evaluation of $f(x)$ also
  terminates by \cref{fact}.

  Otherwise, let $f(y)$ be an arbitrary function call on the RHS of the
  definition of $f(x)$. If $y$ is a constant, then there is a base case for
  $f(y)$. Otherwise, let $y = x - c$ for some $c > 0$. Then there exists
  $k \in \mathbb{N}_{0}$ such that $0 \le x - kc \le c-1$. So, after $k$
  iterations, the sequence of function calls $f(x), f(x-c), f(x-2c),\dots$ will
  match the base case $f(x \mod c)$.
\end{proof}

\begin{lemma}\label{lemma:onefunction}
  Generalising \cref{lemma:oneunary}, let $\mathcal{E}$ be a set of equations
  for \emph{one} $n$-ary function $f$ for some $n \ge 1$. The evaluation of
  $f(\mathbf{x})$ terminates for any $\mathbf{x} \in \mathbb{N}_{0}^{n}$.
\end{lemma}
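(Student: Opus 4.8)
The plan is to generalise the descent argument of \cref{lemma:oneunary}. The new difficulty, absent in the unary case, is that for $n \ge 2$ a base case of $f$ need not be terminal: fixing a single coordinate to a constant still leaves the remaining coordinates free, so—unlike the unary situation, where an all-constant argument has no recursive calls by \cref{fact2}—the right-hand side of a base-case equation may again call $f$. I would therefore prove termination by exhibiting a well-founded measure that strictly decreases along every recursive call, organised as an induction on the arity $n$ with \cref{lemma:oneunary} as the base case $n = 1$. For a call $f(\mathbf{x})$, let its \emph{frozen} coordinates be the positions in which the equation selected by the preference assumption carries a constant, and take the measure to be the lexicographic pair $(\text{number of free coordinates},\ \sum_{j\ \text{free}} x_j)$, with the understanding that, under the induction, freezing a coordinate drops us to an $(n-1)$-free-variable subproblem covered by the hypothesis.

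The key steps, in order, are as follows. First, I would show that the set of frozen coordinates only grows along any chain of recursive calls: if coordinate $j$ is frozen, then by \cref{assumption4} the next call sets $y_j$ to either a constant or a constant minus a constant—hence to a constant—so by the construction of \FindBaseCases (\autoref{line:constant}) and the preference assumption the successor again matches an equation with a constant in position $j$. Second, once all $n$ coordinates are frozen, \cref{fact2} guarantees that the right-hand side contains no recursive call, so the chain stops. Third, while at least one coordinate is free, I would argue that the measure strictly decreases: by \cref{assumption4} every recursive call strictly decreases some coordinate of the form $x_i - c_i$ with $c_i > 0$; if any coordinate freezes at this step the first component drops, and otherwise all coordinates remain of the form $x_i - c_i$ (hence non-increasing) while one strictly decreases, so the free-sum drops. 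Finally, since each right-hand side is a finite computation comprising finitely many function calls (\cref{assumption3}), well-founded induction on the measure shows that evaluating $f(\mathbf{x})$ reduces to finitely many strictly smaller calls—each terminating by the induction hypothesis—followed by a terminating finite computation.

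I expect the main obstacle to be the first step, proving that frozen coordinates remain frozen, together with the precise reading of \cref{assumption4} needed to guarantee that the guaranteed strict decrease always lies in a \emph{free} coordinate rather than a frozen one. This hinges on carefully distinguishing, within a base-case equation, the variable positions (which follow the $x_i - c_i$ pattern and supply the genuine descent) from the constant positions (which can only be rewritten to further constants). I would also need to confirm that a free coordinate never increases without simultaneously freezing, and that whenever a call sets a coordinate to a constant there is a matching base case produced by \FindBaseCases; both follow from \cref{assumption4} and the construction in \cref{alg:findbasecases}, but they must be verified to close the argument.
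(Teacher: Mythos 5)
Your high-level skeleton---induction on the arity $n$, anchored at \cref{lemma:oneunary}---is the same as the paper's, and your opening observation that base cases of an $n$-ary function need not be terminal is precisely what motivates the paper's inductive step. But the paper's step is far lighter than yours: it notes that a base case of $f$, having one parameter pinned to a constant, simply \emph{is} a set of equations for a single function of arity $n-1$, so the inductive hypothesis immediately yields termination of every base-case evaluation; the recursive definition of $f$ is then dispatched as in \cref{lemma:oneunary} (a call carrying a constant argument matches a base case produced by \autoref{line:constant} of \cref{alg:findbasecases} and is handed to the inductive hypothesis, while a call in which arguments have the form $x_i - c_i$ strictly descends by \cref{assumption4}), with \cref{assumption3} covering the per-equation computation. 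All of your frozen-coordinate bookkeeping is absorbed by that single application of the inductive hypothesis.

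Your explicit lexicographic measure, by contrast, has a genuine hole at exactly the step you flagged: frozen coordinates do \emph{not} provably stay frozen, and the claim does not follow from \cref{assumption4} plus \cref{alg:findbasecases} as you hope. Matching (as defined in \cref{sec:algebra}) only constrains positions where \emph{both} sides are constants, and \FindBaseCases{} adds base cases that pin one coordinate at a time, via a single substitution $x_i \mapsto y_i$. So a successor call carrying constants in position $j$ may simultaneously match a single-constant base case pinning a different position $j' \neq j$, and the preference assumption (``most constants on the LHS'') cannot break the tie in your favour. Coordinate $j$ then re-enters the free set at value $c_j$ or $k - c_j$, where the constant $c_j$ may be arbitrarily large. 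Concretely: suppose the definition of $f$ contains the call $f(l-1, m-2)$, generating base cases $f(l,0)$ and $f(l,1)$, and some equation with LHS $f(k, m)$ contains the call $f(9, m-1)$, generating the base case $f(9, m)$. Evaluating at $m = 2$ produces the call $f(9,1)$, which matches both $f(9,m)$ and $f(l,1)$, each with one constant; if $f(l,1)$ is selected, the frozen set moves from $\{1\}$ to $\{2\}$, the free count stays at $1$, and the free sum jumps from $2$ to $9$. So your pair $(\text{free count}, \text{free sum})$ is not monotone along call chains and the well-founded descent does not close. The argument is repairable---constant arguments can only take the finitely many values appearing syntactically in $\mathcal{E}$, or, more simply, any call with a constant coordinate can be handed directly to the $(n-1)$-ary inductive hypothesis, which is the paper's move---but as written the measure argument fails.
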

\begin{proof}
  If $f$ is non-recursive, the evaluation of $f(\mathbf{x})$ terminates by
  previous arguments. We proceed by induction on $n$, with the base case of
  $n=1$ handled by \cref{lemma:oneunary}. Assume that $n > 1$. Any base case of
  $f$ can be seen as a function of arity $n-1$ since one of the parameters is
  constant. Thus, the evaluation of any base case terminates by the inductive
  hypothesis. It remains to show that the evaluation of the recursive equation
  for $f$ terminates, but this follows from \cref{assumption3}.
\end{proof}

Now that we have handled the special case when $\mathcal{E}$ defines one
function, we are ready to prove the main theorem.

\begin{proof}[Proof of \cref{thm:halting}]
  We proceed by induction on the number of functions $n$. The base case of $n=1$
  is handled by \cref{lemma:onefunction}. Let ${(f_{i})}_{i=1}^{n}$ be some
  topological ordering of these $n > 1$ functions. If $f = f_{j}$ for $j < n$,
  then the evaluation of $f(\mathbf{x})$ terminates by the inductive hypothesis
  since $f_{j}$ cannot call $f_{n}$ by \cref{assumption2}. Using the inductive
  hypothesis that all function calls to $f_{j}$ (with $j < n$) terminate, the
  proof proceeds similarly to the proof of \cref{lemma:onefunction}.
\end{proof}

\subsection{Propagating Domain Size Assumptions}\label{sec:propagating}

\begin{algorithm}[t]
  \caption{\protect\Propagate{$\phi$, $\Delta$, $n$}}\label{alg:propagate}
  \KwIn{sentence $\phi$, domain $\Delta$, $n \in \mathbb{N}_{0}$}
  \KwOut{sentence $\phi'$}
  $\phi' \gets \emptyset$\;
  \uIf{$n = 0$}{
    \ForEach{clause $C \in \phi$}{
      \lIf{$\Delta \not\in \Doms(C)$}{$\phi' \gets \phi' \cup \{\, C \,\}$}
      \Else{
        $C' \gets \{\, l \in C \mid \Delta \not\in \Doms(l) \,\}$\;
        \If{$C' \ne \emptyset$}{\label{line:presmoothing}
          $l \gets \text{an arbitrary literal in } C'$\;\label{line:smoothing1}
          $\phi' \gets \phi' \cup \{\, C' \cup \{\, \neg l \,\} \,\} $\;\label{line:smoothing2}
        }
      }
    }
  }
  \Else{
    $D \gets \text{a set of $n$ new constants in $\Delta$}$\;
    \ForEach{clause $C \in \phi$}{
      ${(x_{i})}_{i=1}^{m} \gets \text{the variables in $C$ with domain $\Delta$}$\;
      \lIf{$m = 0$}{$\phi' \gets \phi' \cup \{\, C \,\}$}
      \lElse{$\phi' \gets \phi' \cup \{\, C[x_{1} \mapsto c_{1}, \dots, x_{m} \mapsto c_{m}] \mid {(c_{i})}_{i=1}^{m} \in D^{m} \,\}$}
    }
  }
\end{algorithm}

\Cref{alg:propagate}, called \Propagate, modifies the sentence $\phi$ based on
the assumption that $|\Delta| = n$. When $n=0$, some clauses become vacuously
satisfied and can be removed. When $n > 0$, partial grounding replaces all
variables with domain $\Delta$ with constants. (None of the sentences examined
in this work had $n > 1$.) \Cref{alg:propagate} handles these two cases
separately. For a literal or clause $C$, we write the set of corresponding
domains as $\Doms(C)$. In the case of $n = 0$, there are three types of clauses
to consider:
\begin{enumerate}
  \item those that do not mention $\Delta$,\label[type]{type1}
  \item those in which every literal contains variables quantified over
  $\Delta$ and\label[type]{type2}
  \item those with some literals containing such variables and some
        without.\label[type]{type3}
\end{enumerate}
We transfer clauses of \cref{type1} to the new sentence $\phi'$ without any
changes. For clauses of \cref{type2}, $C'$ is empty, so these clauses are
filtered out. As for clauses of \cref{type3},
lines~\ref{line:presmoothing}--\ref{line:smoothing2} perform a new kind of
smoothing, the explanation of which we defer to \cref{sec:smoothing}.

In the case of $n>0$, $n$ new constants are introduced. Let $C$ be an arbitrary
clause in $\phi$, and let $m \in \mathbb{N}_{0}$ be the number of variables in
$C$ quantified over $\Delta$. If $m=0$, $C$ is added directly to $\phi'$.
Otherwise, a clause is added to $\phi'$ for every possible combination of
replacing the $m$ variables in $C$ with the $n$ new constants.

\begin{example}
  Let $C \coloneqq \forall x \in \Gamma\text{. }\forall y, z \in \Delta\text{.
  } \neg P(x, y) \lor \neg P(x, z) \lor y=z$. Then
  $\Doms(C) = \Doms(\neg P(x, y)) = \Doms(\neg P(x, z)) = \{\, \Gamma, \Delta \,\}$,
  and $\Doms(y=z) = \{\, \Delta \,\}$. A call to \Propagate{$\{\, C \,\}$,
    $\Delta$, $3$} would result in the following sentence with nine clauses:
  \begin{align*}
    (\forall x \in \Gamma\text{. }\neg P(x, c_{1}) \lor& \neg P(x, c_{1}) \lor c_{1}=c_{1})\land{}\\
    (\forall x \in \Gamma\text{. }\neg P(x, c_{1}) \lor& \neg P(x, c_{2}) \lor c_{1}=c_{2})\land{}\\
    \vdots&\\
    (\forall x \in \Gamma\text{. }\neg P(x, c_{3}) \lor& \neg P(x, c_{3}) \lor c_{3}=c_{3}).
  \end{align*}
  Here, $c_{1}$, $c_{2}$, and $c_{3}$ are the new constants.
\end{example}

\subsection{Smoothing the Base Cases}\label{sec:smoothing}

Smoothing modifies a circuit to reintroduce eliminated atoms, ensuring the
correct model count~\cite{darwiche2001tractable,DBLP:conf/ijcai/BroeckTMDR11}.
This section describes a similar process performed on
lines~\ref{line:presmoothing}--\ref{line:smoothing2} of \cref{alg:propagate}.
Line~\ref{line:presmoothing} checks if smoothing is necessary, and
lines~\ref{line:smoothing1} and~\ref{line:smoothing2} execute it. If the
condition on \autoref{line:presmoothing} is not satisfied, the clause is not
smoothed but omitted.

Suppose \CompileWithBaseCases calls \Propagate with arguments
$(\phi, \Delta, 0)$, i.e., we are simplifying the sentence $\phi$ by assuming
that the domain $\Delta$ is empty. Informally, if there is a predicate $P$ in
$\phi$ unrelated to $\Delta$, smoothing preserves all occurrences of $P$, even
if all clauses with $P$ become vacuously satisfied.

\begin{example}\label{example:basecasesmoothing}
  Let $\phi$ be
  \begin{align}
    (\forall x \in \Delta\text{. }\forall y, z \in \Gamma&\text{. }Q(x) \lor P(y, z))\land{}\label[clause]{eq:example1}\\
    (\forall y, z \in \Gamma'&\text{. }P(y, z))\label[clause]{eq:example2},
  \end{align}
  where $\Gamma' \subseteq \Gamma$ is a domain introduced by a compilation rule.
  Note that $P$, as a relation, is a subset of $\Gamma \times \Gamma$.

  Now, let us reason manually about the model count of $\phi$ when
  $\Delta = \emptyset$. Predicate $Q$ can only take one value, $Q = \emptyset$.
  The value of $P$ is fixed over $\Gamma' \times \Gamma'$ by \cref{eq:example2},
  but it can vary freely over
  $(\Gamma \times \Gamma) \setminus (\Gamma' \times \Gamma')$ since
  \cref{eq:example1} is vacuously satisfied by all structures. Therefore, the
  correct FOMC should be $2^{|\Gamma|^2 - |\Gamma'|^2}$. However, without
  \autoref{line:smoothing2}, \Propagate would simplify $\phi$ to
  $\forall y, z \in \Gamma'\text{. }P(y, z)$. In this case, $P$ is a subset of
  $\Gamma' \times \Gamma'$. This simplified sentence has only one model:
  $\{\, P(y, z) \mid y, z \in \Gamma' \,\}$. By including
  \autoref{line:smoothing2}, \Propagate transforms $\phi$ to
  \[
    (\forall y, z \in \Gamma\text{. }P(y, z) \lor \neg P(y, z)) \land (\forall y, z \in \Gamma'\text{. }P(y, z)),
  \]
  which retains the correct model count.
\end{example}

It is worth mentioning that the choice of $l$ on \autoref{line:smoothing1} of
\cref{alg:propagate} is inconsequential because any choice achieves the same
goal: constructing a tautological clause that retains the literals in $C'$.

\subsection{Generating C++ Code}\label{sec:cpp}

In this section, we will describe the final step of \Cranetwo{} as outlined in
\cref{fig:overview}, i.e., translating the set of equations $\mathcal{E}$ into
C++ code. (Note that the compilation does not rely on any unique features of
C++; the equations could easily be compiled into any other programming
language.) Recall that this step is crucial for the usability of the algorithm;
otherwise, function definitions would remain purely mathematical, with no
convenient way to compute the model count for particular domain sizes. Once a
C++ program is produced, it can be executed with different command-line
arguments to determine the model count of the sentence for various domain sizes.
A typical C++ program for one of the benchmarks in \cref{sec:experiments}
contains about \numrange{80}{100} lines of code. Moreover, although we focus on
FOMC in this work, support for WFOMC could easily be added by using
floating-point numbers instead of integers.

\begin{algorithm}[t]
  \caption{A sketch of the C++ program for the equations in
    \cref{example:solution}, particularly highlighting the recursive definition
    of the function $g$.}\label{alg:cpp}
  initialise $\Cache_{g(0, m)}$, $\Cache_{g(l, 0)}$, $\Cache_{g}$, and
  $\Cache_{f}$\; \DontPrintSemicolon \textbf{Function} $g_{0,m}(m)$\textbf{:}
  \dots\; \textbf{Function} $g_{l,0}(l)$\textbf{:} \dots\; \PrintSemicolon
  \Fn{$g(l, m)$}{
    \lIf{$(l, m) \in \Cache_{g}$}{\Return{$\Cache_{g}(l, m)$}}\label{line:cache}
    \lIf{$l = 0$}{\Return{$g_{0,m}(m)$}}\label{line:basecaseone}
    \lIf{$m = 0$}{\Return{$g_{l,0}(l)$}}\label{line:basecasetwo}
    $r \gets g(l-1, m) + mg(l-1,m-1)$\; $\Cache_{g}(l, m) \gets r$\;
    \Return{$r$}\; } \DontPrintSemicolon \textbf{Function} $f(m, n)$\textbf{:}
  \dots\; \PrintSemicolon \Fn{\Main}{ $(m, n) \gets \Input{}$\;
    \Return{$f(m, n)$}\; }
\end{algorithm}

See \cref{alg:cpp} for the typical structure of a generated C++ program. Each
equation in $\mathcal{E}$ turns into a C++ function with a separate cache for
memoisation. Hence, \cref{alg:cpp} has a function and a cache for
$f(\cdot, \cdot)$, $g(\cdot, \cdot)$, $g(\cdot, 0)$, and $g(0, \cdot)$. The
implementation of an equation consists of three parts. First (on
\autoref{line:cache}), we check if the arguments already exist in the
corresponding cache. If they do, we return the cached value. Second (on
lines~\ref{line:basecaseone} and~\ref{line:basecasetwo}), we check if the
arguments match any of the base cases (as defined in \cref{sec:algebra}). If so,
we redirect the arguments to the C++ function for that base case. Finally, if
none of the above cases applies, we evaluate the arguments based on the
expression on the RHS of the equation, store the result in the cache, and return
it.

\section{Experimental Evaluation}\label{sec:experiments}

Our empirical evaluation sought to compare the runtime performance of
{\Cranetwo} with the current state of the art, namely
\textsc{FastWFOMC}\footnote{\url{https://github.com/jan-toth/FastWFOMC.jl}}~\cite{DBLP:conf/kr/TothK24,DBLP:conf/uai/BremenK21}
and \textsc{ForcLift}\footnote{\url{https://github.com/UCLA-StarAI/Forclift}}.
Our experiments involved two versions of \Cranetwo{}: \Cranegreedy{} and
\Cranebfs{}. Like its predecessor (see \cref{sec:compilation}), \Cranetwo{} has
two modes for applying compilation rules to sentences: one that uses a greedy
search algorithm similar to \textsc{ForcLift} and another that combines greedy
and BFS\@.

The experiments used an Intel Skylake \SI{2.4}{\giga\hertz} CPU with
\SI{188}{\gibi\byte} of memory and CentOS~7. We used the Intel C++ Compiler
2020u4 for C++ programs, Julia~1.10.4 for \textsc{FastWFOMC}, and the Java
Virtual Machine 1.8.0\_201 for \textsc{ForcLift} and \Cranetwo{}. Although
implemented in different languages, \Cranetwo{} and \textsc{FastWFOMC} use the
same GNU Multiple Precision Arithmetic Library for arbitrary-precision
arithmetic. Additionally, while \textsc{ForcLift} could be extended to support
arbitrary-precision arithmetic, its compilation rules are not extensive enough
to handle some of the benchmarks.

We ran each algorithm on each benchmark using domains of size
$2^{1}, 2^{2}, 2^{3}$, and so on until an algorithm failed to handle a domain
size due to a timeout (of one hour) or out-of-memory or out-of-precision errors.
We verified the accuracy of the numerical answers using the corresponding
integer sequences in the On-Line Encyclopedia of Integer Sequences~\cite{oeis}.

\subsection{Benchmarks}

We compare these algorithms using three benchmarks from previous work. The first
benchmark is the bijection-counting problem from \cref{example:bijections}. The
second benchmark is a variant of the well-known \friends{} Markov logic
network~\cite{DBLP:conf/aaai/SinglaD08,DBLP:conf/uai/BroeckCD12}, which takes
the form of
\[
  (\forall x,y \in \Delta\text{.
  } S(x) \land F(x, y) \Rightarrow S(y)) \land (\forall x \in \Delta\text{.
  }S(x) \Rightarrow C(x)).
\]
In this sentence, we have three predicates, $S$, $F$, and $C$, that denote
smoking, friendship, and cancer, respectively. The first clause states that
friends of smokers are also smokers, and the second clause asserts that smoking
causes cancer. Common additions to this sentence include making the friendship
relation symmetric and assigning probabilities to each clause. Finally, we
include the function-counting problem~\cite{DBLP:conf/kr/DilkasB23}
\begin{equation}\label[formula]{eq:functions}
  (\forall x \in \Gamma\text{. }\exists y \in \Delta\text{.
  }P(x, y)) \land (\forall x \in \Gamma\text{. }\forall y, z \in \Delta\text{.
  }P(x, y) \land P(x, z) \Rightarrow y = z)
\end{equation}
as our third benchmark. Here, predicate $P$ represents a function from $\Gamma$
to $\Delta$. The first clause asserts that each $x$ must have at least one
corresponding $y$, while the second clause ensures the uniqueness of such a $y$.

\begin{remark*}
  We formulate the \bijections{} and \functions{} benchmarks using two domains,
  $\Gamma$ and $\Delta$, as this formulation is known to help FOKC algorithms
  find efficient solutions~\cite{DBLP:conf/kr/DilkasB23}. To compare \Cranetwo{}
  and \textsc{ForcLift} with \textsc{FastWFOMC}, which has no support for
  multiple domains, we set $|\Gamma| = |\Delta|$.
\end{remark*}

\paragraph*{Our Benchmarks in \Ctwo{} and \UFO{}}
For completeness and reproducibility, let us translate the benchmark sentences
from \FO{} to \Ctwo{} and \UFO{}. Since \friends{} is a relatively simple
sentence, it remains the same in \Ctwo{} and \UFO{}. For \functions{}, in
\Ctwo{}, one would write
\[
  \forall x \in \Delta\text{. }\exists^{=1} y \in \Delta\text{. }P(x, y).
\]
In \UFO{}, the equivalent formulation is
\begin{equation}\label[formula]{eq:functions1}
  (\forall x, y \in \Delta\text{. }S(x) \lor \neg P(x, y)) \land (|P| = |\Delta|),
\end{equation}
where $w^{-}(S) = -1$. Although \cref{eq:functions1} has more models than its
counterpart in \Ctwo{}, the negative weight $w^{-}(S) = -1$ causes some of the
terms in the definition of WFOMC to cancel out. The translation of \bijections{}
is similar to that of \functions{}. In \Ctwo{}, one could write
\[
  (\forall x \in \Delta\text{. }\exists^{=1} y \in \Delta\text{.
  }P(x, y)) \land (\forall y \in \Delta\text{. }\exists^{=1} x \in \Delta\text{.
  }P(x, y)).
\]
Similarly, in \UFO{}, the equivalent formulation is
\[
  (\forall x, y \in \Delta\text{.
  }R(x) \lor \neg P(x, y)) \land (\forall x, y \in \Delta\text{.
  }S(x) \lor \neg P(y, x)) \land (|P| = |\Delta|),
\]
where $w^{-}(R) = w^{-}(S) = -1$.

The three benchmarks cover a wide range of possibilities. The \friends{}
benchmark uses multiple predicates and only two variables without employing any
constructs specific to only one of the (W)FOMC algorithms. The \functions{}
benchmark, on the other hand, can still be handled by all the algorithms, but
its formulation relies on algorithm-specific constructs. Namely, for
\textsc{ForcLift} and \Cranetwo{}, the sentence is written using three
variables. Since \textsc{FastWFOMC} supports at most two variables, we rewrite
the \cref{eq:functions} with two variables and a \emph{cardinality constraint}.
Lastly, the \bijections{} benchmark is an example of a sentence that
\textsc{FastWFOMC} can handle but \textsc{ForcLift} cannot.

\subsection{Results}

\begin{figure}[t]
  \centering
  \includegraphics{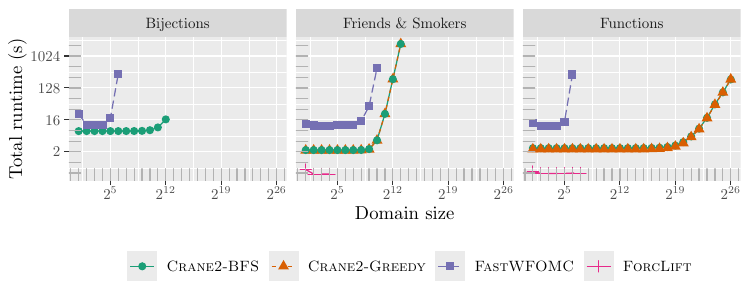}
  \caption{The (total) runtime of the algorithms as a function of domain size.
    Note that both axes are on a logarithmic scale.}\label{fig:plot}
\end{figure}

\Cref{fig:plot} presents a summary of the experimental results. Only
\textsc{FastWFOMC} and \Cranebfs{} could handle the bijection-counting problem.
For this benchmark, the largest domain sizes these algorithms could accommodate
were \num{64} and \num{4096}, respectively. On the other two benchmarks,
\textsc{ForcLift} had the lowest runtime. However, since it can only handle
model counts smaller than $2^{31}$, it only scales up to domain sizes of
\num{16} and \num{128} for \friends{} and \functions{}, respectively.
\textsc{FastWFOMC} outperformed \textsc{ForcLift} in the case of \friends{}, but
not \functions{}, as it could handle domains of size \num{1024} and \num{64},
respectively. Furthermore, both \Cranebfs{} and \Cranegreedy{} performed
similarly on both benchmarks. Similarly to the \bijections{} benchmark,
\Cranetwo{} significantly outperformed the other two algorithms, scaling up to
domains of size \num{8192} and \num{67108864}, respectively.

\begin{table}[t]
  \caption{The compilation times of FOKC algorithms, sorted in ascending
    order.}\label{tbl:compilation}
  \centering
  \begin{tabular}{llr}
    \toprule
    Algorithm & Benchmark & Compilation time (s)\\
    \midrule
    \rowcolor{gray!10}
              & \functions{} & 0.44\\
    \rowcolor{gray!10}
    \multirow{-2}{*}{\textsc{ForcLift}} & \friends{} & 0.52\\
              & \friends{} & 2.20\\
    \multirow{-2}{*}{\Cranegreedy{}} & \functions{} & 2.36\\
    \rowcolor{gray!10}
              & \friends{} & 2.23\\
    \rowcolor{gray!10}
              & \functions{} & 2.53\\
    \rowcolor{gray!10}
    \multirow{-3}{*}{\Cranebfs{}} & \bijections{} & 7.70\\
  \end{tabular}
\end{table}

One might notice that the runtime of \textsc{FastWFOMC} and \textsc{ForcLift} is
slightly higher for the smallest domain size. This peculiarity is the
consequence of \emph{just-in-time} (JIT) compilation. As \Cranetwo{} is only run
once per benchmark, we include the JIT compilation time in its overall runtime
across all domain sizes. Additionally, \cref{tbl:compilation} summarises the
compilation time of each FOKC algorithm on each suitable benchmark. While the
compilation time of \textsc{ForcLift} is generally lower compared to
\Cranetwo{}, neither significantly affects the overall runtime.

Based on our experiments, which algorithm should one use in practice? If
\textsc{ForcLift} can handle the sentence and the domain sizes are reasonably
small, it is likely the fastest algorithm. In other situations, \Cranetwo{} will
likely be significantly more efficient than \textsc{FastWFOMC} regardless of
domain size, provided both algorithms can handle the sentence.

\section{Conclusion and Future Work}\label{sec:conclusion}

In this work, we have presented a scalable, automated FOKC-based approach to
FOMC\@. Our algorithm involves completing the definitions of recursive functions
and subsequently translating all function definitions into C++ code. Empirical
results demonstrate that \Cranetwo{} can scale to larger domain sizes than
\textsc{FastWFOMC} while supporting a wider range of sentences than
\textsc{ForcLift}. The ability to efficiently handle large domain sizes is
particularly crucial in the weighted setting, as illustrated by the \friends{}
example, where the model captures complex social networks with probabilistic
relationships. Without this scalability, these models would have limited
practical value.

Future directions for research include conducting a comprehensive experimental
comparison of FOMC algorithms to better understand their comparative performance
across various sentences. The capabilities of \Cranetwo{} could also be
characterised theoretically, for example, by proving completeness for logic
fragments liftable by other
algorithms~\cite{DBLP:journals/jair/Kuzelka21,DBLP:conf/aaai/TothK23,DBLP:journals/ai/BremenK23}.
Additionally, the efficiency of FOMC algorithms can be further analysed using
fine-grained complexity, which would provide more detailed insights into the
computational demands of different sentences.

\bibliography{paper}

\end{document}